\crefname{equation}{}{}
\Crefname{equation}{}{}
\newtheoremstyle{mythmstyle}
  {8 pt} 
  {3 pt} 
  {} 
  {} 
  {\bfseries} 
  {.} 
  {.5em} 
  {} 
\theoremstyle{mythmstyle}
\newtheorem{theorem}{Theorem}
\newtheorem{lemma}[theorem]{Lemma}
\newtheorem{corollary}[theorem]{Corollary}
\newtheorem{remark}{Remark}
\newtheorem{definition}{Definition}
\crefname{definition}{\textbf{definition}}{definitions}
\Crefname{definition}{Definition}{Definitions}
\crefname{assumption}{\textbf{assumption}}{assumptions}
\Crefname{assumption}{Assumption}{Assumptions}
\begin{document}

\title{Network Cross-Validation for Determining the Number of Communities in Network
Data}

\author[1]{Kehui Chen}
 \author[2]{Jing Lei}
\affil[1]{Department of Statistics, University of Pittsburgh}
\affil[2]{Department of Statistics, Carnegie Mellon University}

\maketitle

\begin{abstract}
The stochastic block model and its variants have been a popular tool
in analyzing large network data with community structures.
In this paper we develop an
efficient \emph{network cross-validation} (NCV) approach to determine the number of communities,
as well as to choose between the regular stochastic block model and the degree corrected
block model. The proposed NCV method is based on a \emph{block-wise node-pair splitting} technique,
combined with an integrated step of community recovery using sub-blocks of the adjacency matrix.
We prove that the probability of under selection vanishes as the number of node increases, under
mild conditions satisfied by a wide range of popular community recovery algorithms. The solid performance of our method is also demonstrated in extensive simulations and a data example.

\end{abstract}


\section{Introduction}
\label{sec:introduction}
In the last few decades, the amount of
network data and the need for relevant statistical inference tools
are growing at a rapid pace.
One of the main research topics in network data analysis is to identify
hidden communities from a single observed network.
Roughly speaking, network community refers to the phenomenon that
individuals close to each other are more likely to connect, and hence
the edge density varies from within coherent subpopulations to between
subpopulations \citep{NewmanG04,Newman06}.
 The stochastic block model
\citep{Holland83} and
its variants such as the degree
corrected block model \citep{KarrerN11} are powerful and
mathematically elegant tools to model large networks with community
structures, and have been proved useful
in many scientific areas such as social science, biology, and information
science \citep{FaustW92,Kemp06,BickelC09}.

The community recovery problem for stochastic block models has been the focus of much
research effort in the past decade, in
several areas including statistics
\citep{BickelC09,ZhaoLZ12,Jin12,Fishkind13,LeiR14},
machine learning \citep{McSherry01,ChenSX12,ChaudhuriCT12,Anandkumar14},
statistical physics \citep{Decelle11,Krzakala13}, and probability theory
\citep{Massoulie13,MosselNS13,AbbeBH14}.  These methods are based on a wide range
of different tools such as maximum likelihood, convex optimization, spectral
methods, and belief propagation, etc.  However, almost all of these methods
require $K$, the total number of communities, to be known in advance.

Unlike the community recovery problem, determining the number of communities remains a challenging problem and gains increasing interest recently. \cite{ZhaoLZ11PNAS} propose to
sequentially extract one significant community from the remaining of the network, and
they approximate the null distribution of their optimizing statistic by
bootstrapping from an Erd\H{o}s-R\'{e}nyi graph. \cite{BickelS13test} propose to test
$K=1$ vs $K>1$ at each step of a recursive bipartition algorithm. They derive the
asymptotic null distribution of the largest eigenvalue of the suitably scaled and
centered adjacency matrix. But the convergence rate is slow and an empirical tuning is
needed in practice. Moreover, these sequential or recursive testing procedures only
work for certain types of community structures. After the first draft of this work, there have been some new developments along the line of testing $K=\tilde K$ for a given candidate value  $\tilde K$ \citep{lei2014goodness}.
Some model selection criteria based on approximated likelihood or Bayesian inference have also been proposed, including \cite{latouche2012variational,mcdaid2013improved,Saldana2014how,wang2015likelihood}.
The performance of these methods often crucially depends on the choice of a penalty term or some prior knowledge of the model parameters.

In this paper, we focus on a generic idea of network cross-validation. Cross-validation is a very popular and appealing method in many model selection problems. The adaptation to network data is usually through a node splitting procedure and has been considered by \cite{airoldi08, neville2012correcting}, among others. A random node-pair splitting method has been used in \cite{hoff2008modeling} for model selection under a Bayesian framework. These methods, even though applicable to the community recovery problem, are usually computationally intensive. More detailed discussion and comparison can be found in \Cref{subsec:splitting}. Moreover, the theoretical investigation for cross-validation methods in network model selection remains open.

The NCV method developed in this paper is based on a \emph{block-wise node-pair splitting} technique.  The splitting step divides the nodes randomly into two groups $\mathcal{N}_1$ and $\mathcal{N}_2$. The observed edge formation between node pairs $(i,j)$, for $i\in \mathcal{N}_1$, and $j\in \mathcal{N}_1 \cup \mathcal{N}_2$ are used as a fitting set, and the node pairs $(i,j)$ for $i,j \in \mathcal{N}_2$ are used as a testing set.
Such a node-pair splitting is superior to a simple 
node splitting. It originates from two key
observations. First, the fitting set carries substantial relationship information for all nodes in the network. That is, we can consistently estimate the membership of all the nodes as well as the
community-wise edge probability matrix from the fitting set.  Second, given
the community membership, the edge formation in the fitting set and in the testing set are independent. 
The second observation reflects a significant difference between the traditional parameterization
of the stochastic block model that treats the node memberships as missing variables,
and the conditional parameterization that treats the memberships as parameters.
The traditional parameterization can model networks of arbitrary size and
has a motivation from exchangeable random graphs \citep{BickelC09}.
However, for community recovery based on a single observed realization of the stochastic block model,
the useful information for statistical inference is largely contained in
the randomness of edge formation.

We describe the algorithm in detail in \Cref{sec:sbm-ncv}. The proposed V-fold NCV method is novel and is of substantial practical interest for several reasons. First, it is computationally efficient, requiring only one model fitting for each fold. Second, it is tuning free except the number of folds. Moreover, it is general enough to be combined with different community recovery techniques.
In \Cref{subsec:theorem}, we characterize the theoretical properties of the proposed NCV method. We show that under appropriate conditions, when combined with popular community recovery techniques, such as modularity based optimization and spectral clustering, the proposed NCV does not underestimate the number of communities with probability tending to one.  The protection
against overestimating is also discussed.
In \Cref{sec:experiment}, we demonstrate the effectiveness of our method via extensive simulations, where different types of network community structures
are investigated.

The NCV method can be applied to select the best model from a general collection of candidate
models, which does not need to be nested or hierarchical. For example,
one can use NCV to choose between
the regular stochastic block model and the degree corrected block model,
with simultaneous choice of number of communities. Moreover, the block-wise node-pair
splitting idea behind NCV can be further extended to other network models with conditional
edge independence. These extensions are described in \Cref{sec:dcbm} and \Cref{sec:discussion}, and illustrated in an
application to a
political blog data in \Cref{sec:experiment}, where the NCV method chooses
 the degree corrected block model with two communities, matching pervious findings in the literature. 
 Further discussions can be found in \Cref{sec:discussion}.

\section{Network cross-validation for stochastic block models}
\label{sec:sbm-ncv}
In a stochastic block model with $n$ nodes and $K$ communities,
the observed random graph is often represented by a
$n$ by $n$ symmetric binary adjacency matrix $A$.
The community structure is represented by a vector $g\in \{1,...,K\}^n$ with
$g_i$ being the community that node $i$ belongs to.  Given the membership vector $g$,
each edge $A_{ij}$ ($i<j$) is an
independent Bernoulli variable satisfying
\begin{equation}
  \label{eq:sbm}
P(A_{ij}=1) = 1-P(A_{ij}=0) = B_{g_i g_j}\,,
\end{equation}
where $B\in [0,1]^{K\times K}$ is a symmetric matrix representing the community-wise
edge probability.  In this section we focus on the problem of estimating $K$,
the number of communities, from a single observed network $A$.
Generalization to
other model selection problems is straightforward and will be discussed in
later sections.

\subsection{Block-wise node-pair splitting}\label{subsec:splitting}

Let $(\mathcal N_1,\mathcal N_2)$ be a random partition of the nodes,
the adjacency matrix can be written in a block form \begin{equation}\label{eq:A-block}
  A=\left(\begin{array}{cc}
A^{(11)} & A^{(12)}\\
A^{(21)} & A^{(22)}
\end{array}\right)\,,\end{equation}
where $A^{(jj)}$ is the adjacency matrix for nodes in $\mathcal N_j$ ($j=1,2$).
The splitting step puts node pairs in $A^{(11)}$ and $A^{(12)}$ in the fitting sample and puts node pairs in
$A^{(22)}$ in the testing sample. Such a split makes full use of the entire observed adjacency matrix and provides a way to directly compare multiple candidate values of $K$ based on the predictive loss on the testing sample.

The block-wise node-pair splitting is novel, and has several appealing features when compared with existing cross-validation methods for network data that are mostly based on a node splitting technique.
In the node splitting method, where
the nodes, instead of the node pairs, are split into a fitting set and a testing set, one typically uses the parameterization of the membership distribution $\pi=(\pi_1,...,\pi_K)$
such that the node membership is generated independently with probability
$P(g_i = k)=\pi_k$ for $1\le i\le n$. After the node splitting, the model parameter
 $(\pi,B)$ is estimated on the subnetwork
confined on the fitting set of nodes, and 
evaluated on the subnetwork confined on the testing
subset of nodes.
This approach has some drawbacks. First, calculating the full likelihood in terms of $\pi$ and $B$ in presence of a missing
membership vector $g$ is computationally demanding. Second, it does not use the observed edge formation between
the fitting and testing nodes, introducing unnecessary randomness in the validation
step by treating the node memberships as random variables.
In contrast, the block-wise node-pair splitting allows us to use the conditional parameterization that treats the memberships as parameters and estimate,  from the fitting set of node pairs, the membership $g$ for all nodes in the entire network.  Such a procedure fully exploits the information carried in node pairs between $\mathcal N_1$ and $\mathcal N_2$, making the model fitting and validation statistically and computationally more efficient.

Comparing with the uniformly random node-pair splitting method, the proposed block-wise splitting technique is much simpler to implement as many existing community recovery methods can be easily extended to the rectangular adjacency matrix $(A^{(11)},A^{(12)})$ (\Cref{subsec:estimation}). In contrast, the random node-pair splitting will result in incomplete adjacency matrices, and users are left with limited choices which are mainly based on likelihood.

%

\subsection{Estimating model parameters from the rectangular matrix}\label{subsec:estimation}
After splitting, we estimate model parameters $(g, B)$ from the $n_1\times n$
 rectangular matrix
$A^{(1)}=(A^{(11)}, A^{(12)})$, where $n_1$ is the cardinality of $\mathcal N_1$.  Many standard procedures designed for
the full adjacency matrix can be extended to this case, such as likelihood
based methods and spectral methods.  Here we focus on spectral clustering, because
it is simple to implement and the analysis is straightforward.  This is also
the method we implement in the numerical experiments presented in \Cref{sec:experiment}.

For a given candidate value $\tilde K$ of $K$, the simple spectral clustering method first performs a singular value
decomposition
on $A^{(1)}$, and estimates $g$  by applying $k$-means clustering on the
rows of
the $n\times \tilde K$ matrix consisting of the leading $\tilde K$ right singular vectors.
Once $\hat g$ is obtained, let $\mathcal N_{j,k}$
be the nodes
in $\mathcal N_j$ with estimated membership $k$, and
 $n_{j,k}=|\mathcal N_{j,k}|$ ($j=1,2$, $1\le k\le \tilde K$).
 We can estimate $B$ using a simple plug-in estimator:
\begin{equation}\label{eq:B-est}
\hat B_{k,k'} =\left\{\begin{array}{ll}
 {\displaystyle\frac{\sum_{i\in \mathcal N_{1,k}, j\in\mathcal N_{1,k'}\cup \mathcal N_{2,k'}}
 A_{ij}}{n_{1,k}(n_{1,k'}+n_{2,k'})}}\,, & k\neq k'\,,\\
 &
 \\
 {\displaystyle
 \frac{\sum_{i,j\in \mathcal N_{1,k}, i<j} A_{ij} +
 \sum_{i\in\mathcal N_{1,k}, j\in\mathcal N_{2,k}}A_{ij}}{
 (n_{1,k}-1)n_{1,k}/2+n_{1,k}n_{2,k}}
 }
 \,,& k=k'\,.
 \end{array}\right.
\end{equation}

\subsection{Validation using the testing set}\label{subsec:validate}
After estimating the parameters $(\hat g,\hat B)$, we can
assess the goodness-of-fit by validating
on the testing set.

For each observation in the testing set, $A_{ij}$ ($i\neq j$, $i,j\in\mathcal N_2$)
is a Bernoulli random variable with parameter $P_{ij}=B_{g_i g_j}$, which is estimated
by $\hat P_{ij}=\hat B_{\hat g_i \hat g_j}$.  Some natural choices of
the loss function $\ell$ include negative log-likelihood
$\ell(x,p)=-x\log p-(1-x)\log(1-p)$, and squared error $\ell(x,p)=(x-p)^2$.
In our numerical experiments, these two loss functions give almost identical
performance.

In the validation step, if the candidate value $\tilde K$ is too small, then the fitted model cannot capture the
fine structures in the data, and will likely lead to poor predictive loss on testing data.
If $\tilde K$ is too large, then the model over fits the data, with noisy prediction
on the testing data. Therefore, it is natural to expect the validated
predictive loss $\hat L(A,\tilde K) = \sum_{i,j\in\mathcal N_2,i\neq j}
 \ell(A_{ij}, \hat P_{ij})\,,$ to be minimized when $\tilde K=K$, the true number
of communities. Partial theoretical supports and further heuristic arguments are given in \Cref{subsec:theorem}.

\subsection{V-fold network cross-validation}\label{subsec:ncv}
Now we formally describe the V-fold network cross-validation procedure.

{\bf Algorithm 1: V-fold network cross-validation}
\begin{enumerate}[itemsep=-3pt,topsep=0pt]
  \item [] \textbf{Input:} adjacency matrix $A$, a set $\mathcal K$ of
  candidate values for $K$, number of folds
  $V\ge 2$.
  \item Randomly split the adjacency matrix into  $V\times V$ equal
  sized blocks
  $$
  A = (A^{(uv)}:1\le u,v\le V)
  $$ similarly as in \eqref{eq:A-block},
  where the nodes are partitioned into $V$ equal-sized subsets
  $\mathcal N_v$ ($1\le v\le V$);
   $A^{(vv)}$ contains observation between node pairs in the $v$th random subset
   $\mathcal N_v$; and $A^{(uv)}$ contains observation between $\mathcal N_v$
  and $\mathcal N_v$.
  \item For each $1\le v\le V$, and each $\tilde K\in \mathcal K$
   \begin{enumerate}[itemsep=-3pt,topsep=0pt]
     \item Estimate model parameters $(\hat g^{(v)}, \hat B^{(v)})$ using
     the rectangular
  submatrix obtained by removing the rows of $A$ in subset $\mathcal N_v$
  $$
  A^{(-v)}=(A^{(rs)}:r\neq v,\ 1\le r, s\le V)\,.$$
  \item Calculate the predictive loss evaluated on $A^{(vv)}$:
  $$
  \hat L^{(v)}(A,\tilde K) = \sum_{i,j\in\mathcal N_v,i\neq j}
  \ell(A_{ij}, \hat P^{(v)}_{ij})\,,
  $$
  where $\hat P^{(v)}_{ij}=\hat B^{(v)}_{\hat g^{(v)}_i,\hat g^{(v)}_j}$.
  \end{enumerate}
  \item Let $\hat L(A,\tilde K)=\sum_{v=1}^V \hat L^{(v)}(A,\tilde K)$ and
  output
  $$
  \hat K = \arg\min_{\tilde K\in\mathcal K} \hat L(A,\tilde K)\,.
  $$
\end{enumerate}
In our experiments we found the performance of NCV insensitive to the choice
of $V$, and we used $V=3$ for all numerical experiments.  Further
discussion on the choice of $V$ and its difference from the regular
cross-validation is given in \Cref{sec:discussion}.

\subsection{Theoretical properties}\label{subsec:theorem}
For two sequences $a_n$ and $b_n$,
we denote $a_n=\Omega(b_n)$ if $b_n=O(a_n)$, and $a_n=\omega(b_n)$ if $b_n=o(a_n)$.

To study the asymptotic behavior of the estimator, we consider
a sequence of SBM's parameterized by $(g^{(n)},B^{(n)}:n\ge 1)$ such that
\begin{enumerate}
  \item [(A1)]
  $g^{(n)}$ is a membership vector of length $n$ with $K$ distinct communities, and the minimum block size is at least $\pi_0 n$ for some constant $\pi_0$.
  \item [(A2)] $B^{(n)}=\rho_n B_0$ where $B_0$ is a $K\times K$ symmetric matrix with entries
  in $(0,1]$, and the rows of $B_0$ are all distinct. The rate $\rho_n$ controls the largest
  edge probability and satisfies $\rho_n=\Omega(\log n /n)$.
\end{enumerate}
The first condition requires the block sizes to be relatively balanced. It is satisfied
with high probability if the memberships are independently generated from a multinomial distribution.
The second condition allows the edge probability to decrease at a rate $\rho_n$ as
$n$ increases. The lower bound on the sparsity $\rho_n$ makes it possible to obtain accurate community
recovery.

The community recovery is an integrated part in the proposed NCV method
and its accuracy plays an important role in the performance of model selection.
We  introduce two notions of community recovery consistency.
\begin{definition}[Exactly consistent recovery]
Given a sequence of SBM's with $K$ blocks parameterized by $(g^{(n)},B^{(n)})$,
we call
a community recovery method $\hat g$ \emph{exactly consistent}
if $P(\hat g(A,K) = g^{(n)})\rightarrow 1$, where $A$ is a realization of SBM $(g^{(n)},B^{(n)})$ and the equality is up to a possible label permutation.
\end{definition}

\begin{definition}[Approximately consistent recovery]
  For a sequence of SBM's with $K$ blocks parameterized by $(g^{(n)},B^{(n)})$ and
  a sequence $\eta_n=o(1)$, we say $\hat g$ is approximately consistent with rate $\eta_n$ if,
    $$\lim_{n\rightarrow\infty}P[{\rm Ham}(\hat g(A,K), g)\ge \eta_n n]= 0,$$
  where ${\rm Ham}(\hat g,g)$ is the smallest Hamming distance between $\hat g$ and $g$
  among all possible label permutations.
\end{definition}

Exactly consistent community recovery can be achieved under mild assumptions on
$g^{(n)}$ and $B^{(n)}$. It is known that likelihood methods \citep{BickelC09} are exactly consistent
when $\rho_n n/\log n \rightarrow \infty$, and variants of spectral methods \citep{McSherry01,Vu14,LeiZ14} are exactly consistent when $\rho_n n/\log n > C$
for some constant $C$ depending on $\pi_0$, $B_0$ only.
For the simple spectral clustering algorithm used in our numerical study, the exact consistency
result has not been established. In the following, we first establish the approximately consistent recovery result
when $K$ is the true number of blocks.

\begin{theorem}[Consistency of spectral clustering]\label{thm:sbm-consist}
  Under assumptions A1--A2, for each fold split in NCV, the $\hat g$ estimated
   using spectral clustering as described in \Cref{subsec:estimation} is approximately consistent with rate $(n\rho_n)^{-1}$.
\end{theorem}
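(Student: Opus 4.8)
The plan is to follow the now-standard three-step template for proving consistency of spectral clustering---population analysis, matrix perturbation, and $k$-means rounding---adapted to the rectangular, partially symmetric matrix $A^{(1)}=(A^{(11)},A^{(12)})$. Write $P^{(1)}=\mathbb{E}[A^{(1)}\mid g]$ for the $n_1\times n$ conditional mean matrix, whose $(i,j)$ entry is $B_{g_ig_j}=\rho_n (B_0)_{g_ig_j}$. First I would record the block structure $P^{(1)}=\rho_n Z_1 B_0 Z^{\top}$, where $Z_1\in\{0,1\}^{n_1\times K}$ and $Z\in\{0,1\}^{n\times K}$ are the membership indicator matrices of $\mathcal N_1$ and of all $n$ nodes. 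Under (A1) the Gram matrices $Z_1^{\top}Z_1$ and $Z^{\top}Z$ are diagonal with entries $\Omega(n)$, and since $B_0$ has rank $K$ (forced by the distinct-rows assumption together with balancedness), $P^{(1)}$ has rank exactly $K$ with smallest nonzero singular value $\sigma_K(P^{(1)})=\Omega(n\rho_n)$. Let $V\in\mathbb{R}^{n\times K}$ collect the leading $K$ right singular vectors of $P^{(1)}$; because $P^{(1)}$ is block-constant in its columns, the rows of $V$ take only $K$ distinct values $\mu_1,\dots,\mu_K$, one per community, and orthonormality of the columns of $V$ forces the separation $\|\mu_k-\mu_\ell\|^2\ge n_k^{-1}+n_\ell^{-1}=\Omega(n^{-1})$ for $k\neq \ell$.

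The second step is the spectral-norm concentration bound $\|A^{(1)}-P^{(1)}\|=O_P(\sqrt{n\rho_n})$. Here the main subtlety is that $A^{(1)}$ is not a matrix of fully independent entries: the sub-block $A^{(11)}$ is symmetric (so $A_{ij}=A_{ji}$) while the cross-block $A^{(12)}$ has independent Bernoulli entries independent of $A^{(11)}$. I would obtain the bound by invoking the sharp operator-norm concentration inequalities available for sparse random matrices with independent-up-to-symmetry Bernoulli entries (of the Lei--Rinaldo / Bandeira--van Handel type), which hold precisely in the regime $\rho_n=\Omega(\log n/n)$ guaranteed by (A2); this is exactly where the sparsity lower bound is consumed. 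Feeding this into the Davis--Kahan $\sin\Theta$ theorem for singular subspaces, together with $\sigma_K(P^{(1)})=\Omega(n\rho_n)$ and $\sigma_{K+1}(P^{(1)})=0$, gives a Frobenius-norm bound on the estimated right singular vectors $\hat V$:
\[
\|\hat V - V O\|_F \;\le\; \frac{c\sqrt{K}\,\|A^{(1)}-P^{(1)}\|}{\sigma_K(P^{(1)})}\;=\;O_P\!\left(\sqrt{\frac{K}{n\rho_n}}\right),
\]
for some $K\times K$ orthogonal matrix $O$.

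Finally I would convert the subspace error into a misclassification count. Applying the standard $k$-means rounding lemma (e.g.\ the argument in Lei--Rinaldo): any node assigned by $k$-means to the wrong population centroid contributes at least $\tfrac12\delta^2$ to $\|\hat V-VO\|_F^2$, where $\delta=\min_{k\neq\ell}\|\mu_k-\mu_\ell\|$. Hence the number of misclassified nodes is at most
\[
\mathrm{Ham}(\hat g,g)\;\le\; \frac{c\,\|\hat V-VO\|_F^2}{\delta^2}\;=\;O_P\!\left(\frac{K/(n\rho_n)}{1/n}\right)\;=\;O_P\!\left(\frac{K}{\rho_n}\right)\;=\;O_P(\rho_n^{-1}),
\]
which is exactly $O_P\big((n\rho_n)^{-1}\cdot n\big)$, establishing approximate consistency at the claimed rate $\eta_n=(n\rho_n)^{-1}$. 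The only point needing care in this last step is that the algorithm uses an ordinary (and possibly only approximately optimal) $k$-means solver; I would handle this with the usual $(1+\varepsilon)$-approximation version of the rounding lemma, which changes only constants. The main obstacle throughout is the concentration bound of the second step: controlling the operator norm of the mixed symmetric/rectangular error matrix down to the optimal $\sqrt{n\rho_n}$ scale in the sparse regime, since a naive matrix Bernstein bound loses a $\sqrt{\log n}$ factor that would degrade the final rate.
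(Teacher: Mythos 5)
Your proposal follows essentially the same route as the paper's proof: the block structure of $P^{(1)}$ giving $\sigma_K(P^{(1)})=\Omega(n\rho_n)$ and row separation $\Omega(n^{-1/2})$ among the population singular vectors, the operator-norm bound $\|A^{(1)}-P^{(1)}\|=O_P(\sqrt{n\rho_n})$ (which the paper obtains for the rectangular submatrix simply via $\|\tilde A-\tilde P\|\le\|A-P\|$ and Theorem 5.2 of Lei--Rinaldo, so the "mixed symmetric/rectangular" structure you worry about is not an obstacle), a Wedin $\sin\Theta$ bound for the right singular subspace, and the approximate $k$-means rounding lemma, yielding $O_P(\rho_n^{-1})$ misclassified nodes, i.e.\ rate $(n\rho_n)^{-1}$. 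The one inaccuracy is your parenthetical claim that distinct rows of $B_0$ together with balancedness force $\mathrm{rank}(B_0)=K$ --- this is false (a rank-one matrix can have distinct rows); full rank of $B_0$ is an additional implicit assumption needed for $\sigma_K(P^{(1)})=\Omega(n\rho_n)$, one that the paper itself only makes explicit in part (b) of Corollary 1.
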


Now we state the main theorem. For a given $V$-fold block-wise partition of $A$, Let
$$L(A)=\sum_{v=1}^V\sum_{i,j\in\mathcal N_v,i\neq j}\ell(A_{ij},P_{ij})\,.$$

\begin{theorem}\label{thm:error-bound-main}
  Under conditions A1--A3, with the loss function $\ell(a,p)=(a-p)^2$, for a candidate
  $\tilde K$, we have
  \begin{enumerate}
    \item [(a)] When $\tilde K<K $, for \emph{any} estimator $(\hat g, \hat B)$, we have
  $$
  \hat L(A, \tilde{K}) - L(A)
  \ge
    c(n\rho_n)^2 +O_P(1)\,.
  $$
\item  [(b)] When $\tilde{K}= K$, if $\hat g$ is exactly consistent and $\hat B$ is estimated
  as in \eqref{eq:B-est}, then
  $$
  \left|\hat L(A,\tilde{K}) - L(A)\right|=O_P(n\rho_n^{3/2}).
  $$

  \item [(c)] When $\tilde{K}=K$, if $\hat g$ is approximately consistent with rate $\eta_n$ and $\hat B$
  is estimated as in \eqref{eq:B-est}, then
  \begin{align*}
    \left|\hat L(A,\tilde{K})-L(A)\right|=O_P(n\rho_n^{3/2}+n^2\rho_n\eta_n)\,.
  \end{align*}
  \end{enumerate}
\end{theorem}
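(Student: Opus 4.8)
The plan is to work throughout with the squared-error loss and the elementary expansion
$$\ell(A_{ij},\hat P_{ij})-\ell(A_{ij},P_{ij})=2(A_{ij}-P_{ij})(P_{ij}-\hat P_{ij})+(P_{ij}-\hat P_{ij})^2,$$
so that, summing over the testing pairs of all folds,
$$\hat L(A,\tilde K)-L(A)=\underbrace{2\sum_{v}\sum_{i,j\in\mathcal N_v,\,i\neq j}(A_{ij}-P_{ij})(P_{ij}-\hat P^{(v)}_{ij})}_{T_1}+\underbrace{\sum_{v}\sum_{i,j\in\mathcal N_v,\,i\neq j}(P_{ij}-\hat P^{(v)}_{ij})^2}_{T_2}.$$
The structural fact emphasized in \Cref{subsec:splitting} is that $(\hat g^{(v)},\hat B^{(v)})$ is computed from the fitting matrix $A^{(-v)}$ only, hence is independent of the testing block $\{A_{ij}:i,j\in\mathcal N_v\}$ given the memberships. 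Consequently $T_1$ has conditional mean zero and conditional variance $4\sum(P_{ij}-\hat P^{(v)}_{ij})^2P_{ij}(1-P_{ij})\le C\rho_n T_2$, so that $T_1=O_P(\sqrt{\rho_n T_2})$. This reduces every part to controlling the nonnegative term $T_2$. (For part (a) I read ``any estimator'' as any estimator built from the fitting data $A^{(-v)}$, which is exactly the setting that makes this independence—and hence the statement—valid.)

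For part (a) I would prove a deterministic lower bound on $T_2$, uniform over all $(\hat g,\hat B)$ with $\tilde K<K$ blocks. Fix a fold and condition on the high-probability event (by A1 and concentration of the random split) that every true community keeps $\Omega(n)$ nodes inside $\mathcal N_v$. Let $M_{a\ell}$ count the nodes of true community $a$ that $\hat g^{(v)}$ sends to estimated group $\ell$; since $\tilde K<K$, a pigeonhole over the dominant groups $\ell(a)=\arg\max_\ell M_{a\ell}$ yields two distinct communities $a\neq b$ with $\ell(a)=\ell(b)=\ell$, whence $M_{a\ell},M_{b\ell}\ge \pi_0|\mathcal N_v|/\tilde K$. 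By A2 the rows $a$ and $b$ of $B_0$ differ in some coordinate $c$, so for every testing node $j$ in community $c$ the two true values $\rho_n(B_0)_{ac}$ and $\rho_n(B_0)_{bc}$ must both be matched by the single number $\hat B^{(v)}_{\ell,\hat g_j}$. Using $\min_t\{p(x-t)^2+q(y-t)^2\}=\tfrac{pq}{p+q}(x-y)^2$, each such $j$ forces at least $c\,|\mathcal N_v|\rho_n^2$ into $T_2$; summing over the $\Omega(|\mathcal N_v|)$ such columns gives $\Omega((|\mathcal N_v|\rho_n)^2)$ per fold and $\Omega((n\rho_n)^2)$ overall. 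Since $\hat B\in[0,1]$ forces $T_2=O(n^2)$ and hence $|T_1|=O_P(n\sqrt{\rho_n})$, which is of strictly smaller order than $(n\rho_n)^2$, I obtain $\hat L(A,\tilde K)-L(A)\ge c(n\rho_n)^2$ up to a lower-order remainder (the claimed $O_P(1)$).

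For parts (b) and (c) I first establish the concentration of the plug-in estimator. Under exact consistency $\hat g^{(v)}=g$ with probability tending to one, so each $\hat B^{(v)}_{kk'}$ is an average of order $n^2$ independent $\mathrm{Bernoulli}(B_{kk'})$ variables; Bernstein's inequality with $\rho_n=\Omega(\log n/n)$, together with a union bound over the $K^2$ entries, gives $\max_{k,k'}|\hat B^{(v)}_{kk'}-B_{kk'}|=O_P(\sqrt{\rho_n}/n)$. Then $T_2\le(\text{number of testing pairs})\cdot\max_{k,k'}|\hat B^{(v)}_{kk'}-B_{kk'}|^2=O_P(n^2\cdot\rho_n/n^2)=O_P(\rho_n)$, while for the cross term the crude bound $|T_1|\le 2\bigl(\sum|A_{ij}-P_{ij}|\bigr)\max_{ij}|P_{ij}-\hat P^{(v)}_{ij}|$ with $\sum|A_{ij}-P_{ij}|=O_P(n^2\rho_n)$ (the expected edge count) yields $|T_1|=O_P(n^2\rho_n)\cdot O_P(\sqrt{\rho_n}/n)=O_P(n\rho_n^{3/2})$, which dominates $T_2$ and proves (b). For (c) I split the testing pairs into ``good'' pairs, with both endpoints correctly classified, and ``bad'' pairs, touching one of the $\le\eta_n n$ misclassified nodes. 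The $O(\eta_n n^2)$ bad pairs satisfy $|P_{ij}-\hat P^{(v)}_{ij}|=O(\rho_n)$, contributing $O_P(n^2\rho_n\eta_n)$ to $T_1$ and a smaller $O_P(n^2\rho_n^2\eta_n)$ to $T_2$; the good pairs are handled as in (b), the only change being that the misclassified nodes contaminate $\hat B^{(v)}$ by an additional $O(\eta_n\rho_n)$ bias, which feeds into terms of order $n^2\rho_n^2\eta_n$ and $n\rho_n^{3/2}$ that are lower order or already present. Adding the pieces gives $O_P(n\rho_n^{3/2}+n^2\rho_n\eta_n)$.

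The main obstacle is the uniform combinatorial lower bound in part (a): because it must hold for \emph{every} $(\hat g,\hat B)$ with $\tilde K<K$ blocks, one cannot invoke consistency and must instead quantify, through the pigeonhole/two-mass argument and the row-distinctness in A2, precisely how badly a too-coarse block structure represents $P$. The remaining effort is bookkeeping—propagating the high-probability community-size bounds through the random fold split, and checking that the contamination bias and the cross term are genuinely of smaller order than the stated rates.
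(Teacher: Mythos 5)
Your decomposition $\hat L-L=T_1+T_2$, with a conditional-variance bound on the cross term $T_1$ and a deterministic, column-by-column two-point lower bound on $T_2$, organizes part (a) differently from the paper. The paper instead lower-bounds the test loss by replacing $\hat B_{k,k'}$ with the empirical cell means $\hat p_{k,k',l,l'}$ (which minimize the within-cell sum of squares for \emph{any} $\hat B$) and then applies Bernstein's inequality to those means, arriving at $I+II+III\ge c(n\rho_n)^2+O_P(\rho_n)$. Your version confines all randomness to $T_1$ and makes the combinatorial content of part (a) purely deterministic given the clustering, which handles the degenerate cases ($k=k'$, coinciding $l$'s) uniformly, whereas the paper treats them as ``similar'' side cases; on the other hand, both arguments ultimately need $(\hat g,\hat B)$ to be computed from the fitting block only (yours for the mean-zero property of $T_1$, the paper's for the Bernstein step on $\hat p_{k,k',l,l'}$), so your reading of ``any estimator'' matches the paper's implicit assumption. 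Your parts (b) and (c) are essentially the paper's Cases 2 and 3: the same Bernstein bound $\max_{k,k'}|\hat B_{k,k'}-B_{k,k'}|=O_P(\sqrt{\rho_n}/n)$, the same $O_P(n^2\rho_n)\cdot\epsilon_n$ cross-term bound, and the same good/bad-pair split driven by $|\hat{\mathcal U}_{k,k'}\Delta\mathcal U_{k,k'}|=O(\eta_n n^2)$ (your contamination bias $O(\eta_n\rho_n)$ is in fact slightly sharper than the paper's $O(\eta_n)$).

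There is one genuine misstep, in the last line of your part (a): from $T_2=O(n^2)$ you deduce $|T_1|=O_P(n\sqrt{\rho_n})$ and assert this is of strictly smaller order than $(n\rho_n)^2$. The ratio is $(n\rho_n)^2/(n\rho_n^{1/2})=n\rho_n^{3/2}$, which tends to \emph{zero} in the sparse regime $\rho_n\asymp\log n/n$ permitted by A2, so in that regime your bound on $|T_1|$ swamps the lower bound on $T_2$ and the conclusion does not follow as written. The repair is already contained in your own setup: do not discharge $T_2$ against its worst-case value, but keep the self-normalized bound $|T_1|=O_P(\sqrt{\rho_n T_2})$ and compare it with $T_2$ directly. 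On the event $T_2\ge c(n\rho_n)^2$ one has $\rho_n/T_2\le 1/(cn^2\rho_n)\to 0$, hence
\begin{equation*}
T_1+T_2\;\ge\;T_2-O_P\bigl(\sqrt{\rho_n T_2}\bigr)\;=\;T_2\bigl(1-O_P(\sqrt{\rho_n/T_2})\bigr)\;\ge\;\tfrac{c}{2}(n\rho_n)^2
\end{equation*}
with probability tending to one, which implies the stated bound $\hat L(A,\tilde K)-L(A)\ge c'(n\rho_n)^2+O_P(1)$. With that one-line correction the proposal is a valid proof of \Cref{thm:error-bound-main}.
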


As a consequence,
when the candidate set $\mathcal K$ is fixed and contains the truth, we have the following
guarantee against under selection.
\begin{corollary}\label{cor:underestimation}
  If the candidate set $\mathcal K$ is fixed, independent of $n$ and $K\in \mathcal K$.
then
  under conditions A1--A2, with the loss function $\ell(a,p)=(a-p)^2$, we have
  \begin{enumerate}
  \item [(a)] When $\hat g$ is exactly consistent and $\hat B$ is estimated
  as in \eqref{eq:B-est}, we have $$\lim_{n\rightarrow\infty}P(\hat K < K)= 0\,$$
  \item [(b)] When the simple spectral clustering method as described in \Cref{subsec:estimation} is used to estimate $\hat g$ and $\hat B$ is estimated as in \eqref{eq:B-est}, if $\rho_n=\omega(n^{-1/2})$ and  $B_0$ has positive smallest singular value, we have
  $$\lim_{n\rightarrow\infty}P(\hat K < K)=0.$$
  \end{enumerate}
\end{corollary}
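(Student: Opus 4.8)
The plan is to reduce the event $\{\hat K < K\}$ to a finite collection of pairwise loss comparisons and then invoke the separation between the sub-model loss and the true-model loss supplied by \Cref{thm:error-bound-main}. Since $\hat K=\arg\min_{\tilde K\in\mathcal K}\hat L(A,\tilde K)$ and $K\in\mathcal K$, underselection forces $\hat L(A,\tilde K)\le \hat L(A,K)$ for some $\tilde K<K$. Because $\mathcal K$ is fixed and finite, a union bound reduces the task to showing, for each fixed $\tilde K<K$, that
$$
P\bigl(\hat L(A,\tilde K)\le \hat L(A,K)\bigr)\to 0 .
$$

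For a single such comparison I would write
$$
\hat L(A,\tilde K)-\hat L(A,K)=\bigl[\hat L(A,\tilde K)-L(A)\bigr]-\bigl[\hat L(A,K)-L(A)\bigr],
$$
with $L(A)$ the oracle loss defined immediately before \Cref{thm:error-bound-main}. The first bracket is bounded from below by part~(a) of that theorem, giving $\hat L(A,\tilde K)-L(A)\ge c(n\rho_n)^2+O_P(1)$ for a constant $c>0$ and \emph{any} estimator $(\hat g,\hat B)$. The second bracket is the true-model fluctuation, controlled in absolute value by part~(b) or part~(c) of the theorem according to the recovery guarantee in force. Hence the difference is at least $c(n\rho_n)^2$ minus fluctuation terms, and the whole argument collapses to verifying that the leading quadratic term dominates.

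For claim~(a) of the corollary, exact consistency permits the use of part~(b) of \Cref{thm:error-bound-main}, so the fluctuation is $O_P(n\rho_n^{3/2})$. The required domination $(n\rho_n)^2\gg n\rho_n^{3/2}$ is equivalent to $n\rho_n^{1/2}\to\infty$, which follows from the sparsity floor $\rho_n=\Omega(\log n/n)$ in (A2), since then $n\rho_n^{1/2}\ge (c\,n\log n)^{1/2}\to\infty$; the same floor yields $(n\rho_n)^2\to\infty$, absorbing the $O_P(1)$ term. For claim~(b) I would first apply \Cref{thm:sbm-consist}, using the positive smallest singular value of $B_0$ to secure approximate consistency at rate $\eta_n=(n\rho_n)^{-1}$, and then invoke part~(c), whose fluctuation is $O_P(n\rho_n^{3/2}+n^2\rho_n\eta_n)=O_P(n\rho_n^{3/2}+n)$. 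The new contribution requires $(n\rho_n)^2\gg n$, i.e. $n\rho_n^2\to\infty$, which is precisely the extra hypothesis $\rho_n=\omega(n^{-1/2})$; the remaining term is handled as before.

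Most of the work lives in \Cref{thm:error-bound-main}, so within the corollary the only delicate point is the sparsity bookkeeping: the slow, approximate-recovery regime of claim~(b) pays for itself only once $\rho_n=\omega(n^{-1/2})$ renders the $n^2\rho_n\eta_n=n$ contribution negligible against the $(n\rho_n)^2$ gap, whereas the exact-recovery regime of claim~(a) needs nothing beyond the standing lower bound on $\rho_n$. I would close by combining the pairwise bounds over the finitely many $\tilde K<K$ through the union bound, yielding $P(\hat K<K)\to 0$ in both cases.
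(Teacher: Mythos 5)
Your proposal is correct and follows essentially the same route as the paper: a union bound over the finitely many $\tilde K<K$, the decomposition of $\hat L(A,\tilde K)-\hat L(A,K)$ through the oracle loss $L(A)$, the lower bound from part (a) of \Cref{thm:error-bound-main} against the fluctuation bounds from parts (b) or (c), and the rate checks $n\rho_n^{3/2}=o((n\rho_n)^2)$ for claim (a) and $n^2\rho_n\eta_n=n=o((n\rho_n)^2)$ under $\rho_n=\omega(n^{-1/2})$ for claim (b). The sparsity bookkeeping you carry out is exactly what the paper's one-line justification relies on.
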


The proofs of \Cref{thm:sbm-consist} and \Cref{thm:error-bound-main} are given in the \Cref{sec:proof}. Part (a) of \Cref{cor:underestimation} is a direct consequence of part (a), (b) in \Cref{thm:error-bound-main}. Part (b) can be easily derived by combining \Cref{thm:sbm-consist} and part (a), (c) in \Cref{thm:error-bound-main}.

\begin{remark}
  As commonly known for cross-validation methods, there is no
  corresponding theoretical guarantee against overestimation.
Cross-validation methods typically protect against over fitting
by evaluating the fit on an independent subsample. Here we
provide further detail on this idea for the NCV method in stochastic block models.
For $\tilde{K}>K$, the main challenge of analyzing the over fitting is the characterization of
$\hat g$.
Consider two cases. First, if one of the estimated communities $\hat I_l=\{i:\hat g_i=l\}$
contains a substantial proportion of nodes from two true communities, then
the cross-validated predictive loss can be shown to be large using the same argument as in part (a)
of \Cref{thm:error-bound-main}.  Second, if each of the estimated community
contains mostly nodes from one true community, in this case at least one true community is artificially split by the algorithm, and a corresponding $B_{k,k'}$ is estimated separately in more than one of these artificially
split blocks. The difference between these separate estimates
mainly reflects spurious fluctuations due to randomness, which will only
lead to larger predictive loss when evaluated on an independent set of data.
\end{remark}

\section{Degree corrected block models and further extensions}
\label{sec:dcbm}
\subsection{Choosing $K$ for degree corrected block models}
The degree corrected block model \citep{KarrerN11} is a generalization of the stochastic block
model.  Given membership vector $g$ and community-wise connectivity matrix $B$, the presence
of an edge between nodes $i$ and $j$ is represented by a Bernoulli random variable $A_{ij}$
with
\begin{equation}
  P(A_{ij}=1)=1-P(A_{ij}=0)=\psi_i \psi_j B_{g_i g_j}, 
  \label{eq:dcbm}
  \end{equation}
  where $\psi_i>0$ represents the 
\emph{individual activeness} of node $i$.  Thus the degree corrected block model is
parameterized by a triplet $(g, B, \psi)$, with identifiability constraint 
$\max_{i:g_i=k}\psi_i=1$ for all $k=1,...,K$.
The regular stochastic block model is a special case with $\psi_i=1$ for all
$i$.
 Recently, efficient community
recovery methods have been developed for degree corrected block models
 with high accuracy under mild conditions
 \citep[see, for example,][]{ZhaoLZ12,Jin12,ChaudhuriCT12,LeiR14}.  We now extend the procedure described in 
 \Cref{sec:sbm-ncv} to degree corrected block models. Algorithm 1 is general enough to cover degree-corrected block model. We only need to modify the parameter estimation step. For estimating $(\hat g, \hat B, \hat \psi)$, we consider a
spherical spectral clustering method.

{\bf Spherical spectral clustering:}
\begin{enumerate}[itemsep=-3pt,topsep=0pt]
  \item [] Input: Rectangular $n_1\times n$ matrix $A^{(1)}$, a candidate number of communities $\tilde K$.
  \item Let $\hat U$ be the $n\times K$ matrix consisting of the top $\tilde K$ right singular
  vectors of $A^{(1)}$.
  \item Let $\tilde U$ be the matrix obtained by scaling each row of $\hat U$ to unit norm.
  \item Output $\hat g$ by applying the $k$-median clustering 
  algorithm to the rows of $\tilde U$.
\end{enumerate}

The normalization step in the spherical spectral clustering algorithm 
decouples the effect of node activeness $\psi$
from the community structure.
As shown in the proof of \Cref{thm:dcbm} below, the community information is contained
in the normalized matrix $\tilde U$, whereas the node activeness information
is contained in the row norms of $\hat U$. 

The community recovery is obtained by
a $k$-median clustering algorithm, which finds a collection of center points to
minimize
the sum of $\ell_2$ distance from each data point to its nearest center, 
instead of the squared $\ell_2$ distance as in the $k$-means.  To be precise,
given input matrix $\tilde U$ and number of centers $\tilde K$, the $k$-median clustering solves, possibly with approximation, the following optimization
problem:
$$
\min_{u_1,...,u_{\tilde K}\in \mathbb R^{\tilde K},\ g\in\{1,...,\tilde K\}^n}
\sum_{i=1}^n \|\tilde u_{i}-u_{g_i}\|\,,
$$
where $\tilde u_{i}$ is the $i$th row of $\tilde U$.  Approximate solutions 
within a constant factor from the global optimum
can be found using efficient algorithms \citep{CharikarGTS99,LiS13}.
Our theoretical analysis is applicable to such approximate solutions.
If the matrix $\hat U$ has zero rows, one can apply
the spherical clustering algorithm on the non-zero rows and assign arbitrary
membership to the  zero rows.
Our theoretical analysis shows that with high probability 
the number of zero rows in $\hat U$
is negligible under mild conditions.

To estimate the node activeness parameter $\psi$,
let
\begin{equation}\label{eq:est-psi}
\hat \psi_i' = \text{$\ell_2$~norm~of~the~$i$th~row~of~$\hat U$}\,,
\end{equation}
and $\psi'=(\psi_i':1\le i\le n)$ with
$$\psi_i'=
\frac{\psi_i}{\sqrt{\sum_{j:g_j=g_i}\psi_j^2}}
$$
be the community-normalized version of $\psi$. We will show, in the proof of \Cref{thm:dcbm} below,
 that $\hat\psi'$ is a good estimate of $\psi'$ under appropriate conditions.  Due to the
 scaling identifiability of $\psi$ and $B$, having a good estimate of $\psi'$ is sufficient for our purpose and
 one can proceed with the plug-in estimator:
 \begin{equation}\label{eq:B'}
   \hat B_{k,k'}'=\left\{
   \begin{array}{ll}
   {\displaystyle \frac{\sum_{i\in \mathcal N_{1,k}, j\in\mathcal N_{1,k'}\cup \mathcal N_{2,k'}}A_{ij}}
    {\sum_{i\in \mathcal N_{1,k}, j\in\mathcal N_{1,k'}\cup 
    \mathcal N_{2,k'}}\hat\psi'_i\hat\psi'_j}}\,, & k\neq k\,,\\
    &\\
    {\displaystyle\frac{\sum_{i,j\in \mathcal N_{1,k}, i<j} A_{ij} +
    \sum_{i\in\mathcal N_{1,k}, j\in\mathcal N_{2,k}}A_{ij}}{
    \sum_{i,j\in \mathcal N_{1,k}, i<j} \hat\psi_i'\hat\psi_j' +
        \sum_{i\in\mathcal N_{1,k}, j\in\mathcal N_{2,k}}\hat\psi_i'\hat\psi_j'}}\,,& k=k'\,.
    \end{array}
   \right.
 \end{equation}
The estimated $P_{ij}=E(A_{ij})$ to be used for validation is then
$$
\hat P_{ij} = \hat\psi'_i\hat\psi_j' \hat B'_{\hat g_i,\hat g_j}\,.
$$

To investigate theoretical properties of these estimators, we assume that there are no 
overly inactive nodes.
\begin{enumerate}
  \item [(A3)] $\inf_{1\le i\le n}\psi_i\ge \psi_0$ for a positive constant $\psi_0$.
\end{enumerate}

The performance analysis of NCV for DCBM's is beyond the scope of this paper.
Here we provide accuracy guarantee for community recovery and the edge probability estimation.

\begin{theorem}\label{thm:dcbm}
  Under (A1)--(A3), assume $(\hat g, \hat B, \hat \psi)$ is obtained by spherical spectral clustering combined with Equations \eqref{eq:est-psi} and \eqref{eq:B'}, 
  then for each fold split we have
  \begin{enumerate}
  \item [(a)] if $\alpha_n\geq c \log n/n$ for a large enough $c$, then with probability tending to one $\hat g$ agrees with $g$ on all but $O(\sqrt{n/\alpha_n})$ nodes; 
  \item [(b)] if $\alpha_n^{-1}=o(n^{1/3})$, then
    $\hat P_{ij}=P_{ij}(1+o_p(1))$ for all but a vanishing proportion of node pairs.
  \end{enumerate}
    \end{theorem}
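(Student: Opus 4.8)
The plan is to run the standard spectral pipeline for degree corrected block models (in the spirit of \citep{LeiR14}), adapted to the rectangular fitting matrix $A^{(1)}$, and then to propagate the recovery error into the plug-in estimators. Write $P^{(1)}=\mathbb E A^{(1)}$, so that $P^{(1)}_{ij}=\psi_i\psi_j B_{g_ig_j}$ with $B=\alpha_n B_0$ as in \eqref{eq:dcbm}. First I would record a concentration bound: under (A1)--(A3) with $\alpha_n\gtrsim\log n/n$, standard results for sparse random matrices give $\|A^{(1)}-P^{(1)}\|=O_P(\sqrt{n\alpha_n})$ in operator norm. Next comes the exact population structure, via the factorization $P^{(1)}=\Psi_1\Theta_1 B\,\Theta^\top\Psi$, where $\Theta,\Theta_1$ are the membership matrices and $\Psi,\Psi_1$ the diagonal activeness matrices. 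Setting $S_k=\sum_{j:g_j=k}\psi_j^2$, $D=\mathrm{diag}(\sqrt{S_k})$, and analogously $D_1$ for the row side, the matrix $\Phi=\Psi\Theta D^{-1}$ has orthonormal columns and $P^{(1)}=\Phi_1(D_1BD)\Phi^\top$. Taking the SVD $D_1BD=W\Sigma V^\top$ exhibits $U=\Phi V$ as the right singular matrix, whose $i$th row is $\psi'_i v_{g_i}$ with $\psi'_i=\psi_i/\sqrt{S_{g_i}}$ and $v_k$ the $k$th row of the orthogonal matrix $V$. Two facts drop out: the row norms of $U$ are exactly $\psi'_i$ (which is what \eqref{eq:est-psi} estimates), and the normalized rows collapse to the orthonormal vectors $v_1,\dots,v_K$, pairwise separated by $\sqrt2$. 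Since $B_0$ has positive smallest singular value and $S_k,S^{(1)}_k\asymp n$ by (A1), (A3), one also reads off $\sigma_K(P^{(1)})=\sigma_K(D_1BD)\asymp n\alpha_n$.

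Combining these through the Davis--Kahan theorem yields $\|\hat U-UO\|_F=O_P(\sqrt K/\sqrt{n\alpha_n})$ for an orthogonal $O$. The crux of part (a) is to convert this aggregate bound into a clustering guarantee for the normalized rows $\tilde u_i=\hat u_i/\|\hat u_i\|$, whose population targets are the rows $v_{g_i}$ of the normalized matrix $\tilde U^\ast$. Because (A1) and (A3) force $\psi'_i=\|u_i\|\asymp n^{-1/2}$ uniformly, the map $x\mapsto x/\|x\|$ is Lipschitz with constant $\asymp\sqrt n$ on the good rows, giving $\|\tilde u_i-v_{g_i}\|\lesssim\sqrt n\,\|\hat u_i-(UO)_i\|$ away from the ``bad'' rows where $\|\hat u_i-(UO)_i\|$ is comparable to $\psi'_i$; by Markov the number of bad rows, together with the near-zero rows of $\hat U$, is $\lesssim n\,\|\hat U-UO\|_F^2\lesssim K/\alpha_n$. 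Since the $k$-median objective uses unsquared distances, a constant-factor approximate solution has total assignment cost at most $\|\tilde U-\tilde U^\ast\|_{2,1}$ (the sum of row norms), which is $\le\sqrt n\,\|\tilde U-\tilde U^\ast\|_F\lesssim\sqrt n\cdot\sqrt{K/\alpha_n}$, and a node can be misclassified only if its assignment cost exceeds half the separation $\sqrt2$; this caps the misclassification count at $O(\sqrt{n/\alpha_n})$, which is part (a). I expect this normalization-and-counting step to be the main obstacle: it is where an average (Frobenius) bound must be turned into a per-row statement, where the small-norm rows that normalization amplifies must be controlled, and where the $\sqrt{n/\alpha_n}$ rate is actually produced.

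For part (b) I would propagate these bounds into the plug-in estimates. In the normalized parameterization $P_{ij}=\psi'_i\psi'_j B'_{g_ig_j}$ with $B'_{kk'}=\sqrt{S_kS_{k'}}\,B_{kk'}\asymp n\alpha_n$, so it suffices to show $\hat\psi'_i=\psi'_i(1+o_P(1))$ for most nodes and $\hat B'_{kk'}=B'_{kk'}(1+o_P(1))$. For the activeness, $\hat\psi'_i=\|\hat u_i\|$ obeys $|\hat\psi'_i-\psi'_i|\le\|\hat u_i-(UO)_i\|$, and the relative errors $\|\hat u_i-(UO)_i\|/\psi'_i$ have squared sum $\lesssim K/\alpha_n$, so by Markov all but a vanishing fraction of nodes satisfy $\hat\psi'_i=\psi'_i(1+o_P(1))$. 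For $\hat B'$, the numerator in \eqref{eq:B'} is a conditionally independent Bernoulli sum of mean $\asymp n^2\alpha_n$, so Bernstein gives relative fluctuation $O_P((n^2\alpha_n)^{-1/2})$, while the denominator concentrates at $\sum\psi'_i\psi'_j\asymp n$ after inserting the $\hat\psi'$ and $\hat g$ errors; the $O(\sqrt{n/\alpha_n})$ misclassified nodes perturb both sums by only an $O((n\alpha_n)^{-1/2})$ relative amount. Multiplying, $\hat P_{ij}/P_{ij}=(\hat\psi'_i/\psi'_i)(\hat\psi'_j/\psi'_j)(\hat B'/B')=1+o_P(1)$ on the pairs whose endpoints are correctly classified with accurate activeness, a set of proportion $1-o(1)$. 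The stronger sparsity $\alpha_n^{-1}=o(n^{1/3})$ is what makes the full accounting of these relative errors --- misclassification, activeness estimation, and Bernoulli fluctuation --- close simultaneously; pinning down this threshold is routine bookkeeping rather than a new idea.
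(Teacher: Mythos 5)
Your proposal follows essentially the same route as the paper's proof: the same population factorization of $P^{(1)}$ showing the right singular rows are $\psi_i' v_{g_i}$ with $\sqrt{2}$-separated normalized rows, the same concentration-plus-Wedin/Davis--Kahan bound $\|\hat U-UO\|_F\lesssim (n\alpha_n)^{-1/2}$, the same conversion to a $\|\cdot\|_{2,1}$ bound on the normalized rows (handling near-zero rows separately) feeding into the $k$-median misclassification lemma, and the same oracle-comparison bookkeeping for $\hat\psi'$ and $\hat B'$ in part (b). The only differences are cosmetic (Markov counting of ``bad'' rows versus the paper's direct inequality $\|u/\|u\|-v/\|v\|\|\le 2\|u-v\|/\|v\|$ plus Cauchy--Schwarz, and relative-error sets versus the paper's $S_n$), so the argument is correct and matches the paper's.
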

\Cref{thm:dcbm} is proved in \Cref{sec:proof-dcbm}. Part (a) establishes approximate consistency of spherical spectral clustering applied on the rectangle fitting set of node pairs.  Part (b)
requires a larger average edge probability so that the estimation error of $\hat B$ is well-controlled.  In this case, part (a) of the theorem suggest that 
the proportion of mis-clustered nodes is $o(n^{-1/3})$.

\subsection{Choosing model types and $K$ simultaneously}
The above extension to degree corrected block models allows us to compare and choose, 
for a given
adjacency matrix, between the regular stochastic block model and the degree corrected block 
model.
Sometimes it is desirable to tell if the degree heterogeneity in an observed network
can be explained by pure random fluctuation in a stochastic block model
 \citep[see, for 
example,][]{Yan14}.

Our V-fold NCV can be used simultaneously to choose between
the regular stochastic block model and the degree corrected block model, and to determine
the number of blocks. To this end, one just needs to calculate the regular
stochastic block model validation error $\hat L_{\rm sbm}(A,\tilde K)$,
 and the degree corrected block model validation error
$\hat L_{\rm dcbm}(A,\tilde K)$, for a collection of values of $\tilde K$ as described in
\Cref{subsec:ncv}.  The best model is chosen
 by finding the overall smallest cross-validation loss.  We illustrate this method
 on simulated data and on a political blog data in \Cref{sec:experiment}.

\section{Numerical Experiments}
\label{sec:experiment}
In this section, we illustrate the performance of our proposed NCV method by three simulations and one data example.

\noindent{\bf Simulation 1: edge sparsity and community imbalance.}
This simulation is designed to investigate the performance of choosing $K$ for
stochastic block models under different levels of edge sparsity and community
size imbalance.
We use the community-wise edge probability matrix $B = rB_0$, where the diagonal
entries of $B_0$ are 3 and
off-diagonal entries are 1.
The sparsity level is controlled by $r\in(0,1/3)$.
We use a sequence of $r \in \{ 0.01, 0.02, 0.05, 0.1, 0.2\}$, so
that for $n=1000$ the smallest expected degree ranges from 12 to 400.
Let $n_1$ be the
size of the smallest community,
and the size of each of the remaining $K-1$ communities be $(n-n_1)/(K-1)$. We
generate edges according to the stochastic block model \Cref{eq:sbm}.
For each combination of $(r,K,n_1)$, three-fold NCV model selection is carried out for
50 independently drawn adjacency matrices.
\Cref{fig:simu1} shows the proportion of correct model selection among these 50
repetitions as
functions of $r$ for different $n_1$ and $K = 2, 3, 4$. As expected, the
performance is better as $r$ and $n_1$ increase. In particular, for $K=2$,
in the most balanced case where $n_1 = 500$, the proposed NCV can perfectly
choose the true number of clusters even for the sparsest case where $r = 0.01$, whereas in
the most imbalanced case where $n_1 = 100$, there is a phase transition near $r = 0.1$. The
curve for $n_1 = 200$ is in between. The same phenomenon is observed for $K=3$ and
$K=4$. The proposed NCV can almost perfectly pick out $K$ for relatively balanced
community sizes, even for very sparse cases. For imbalanced cases, one needs to have
moderate expected degrees for the nodes in the smallest community. We note that
community recovery for a given $K$ is an integrated step in the proposed NCV method, so
it is expected that the performance of NCV is closely related to the difficulty of the
community recovery problem when knowing the true $K$, which may depend on the
particular community recovery method used in NCV.

\begin{figure}[t]
\centering
\makebox{\includegraphics[scale=.31]{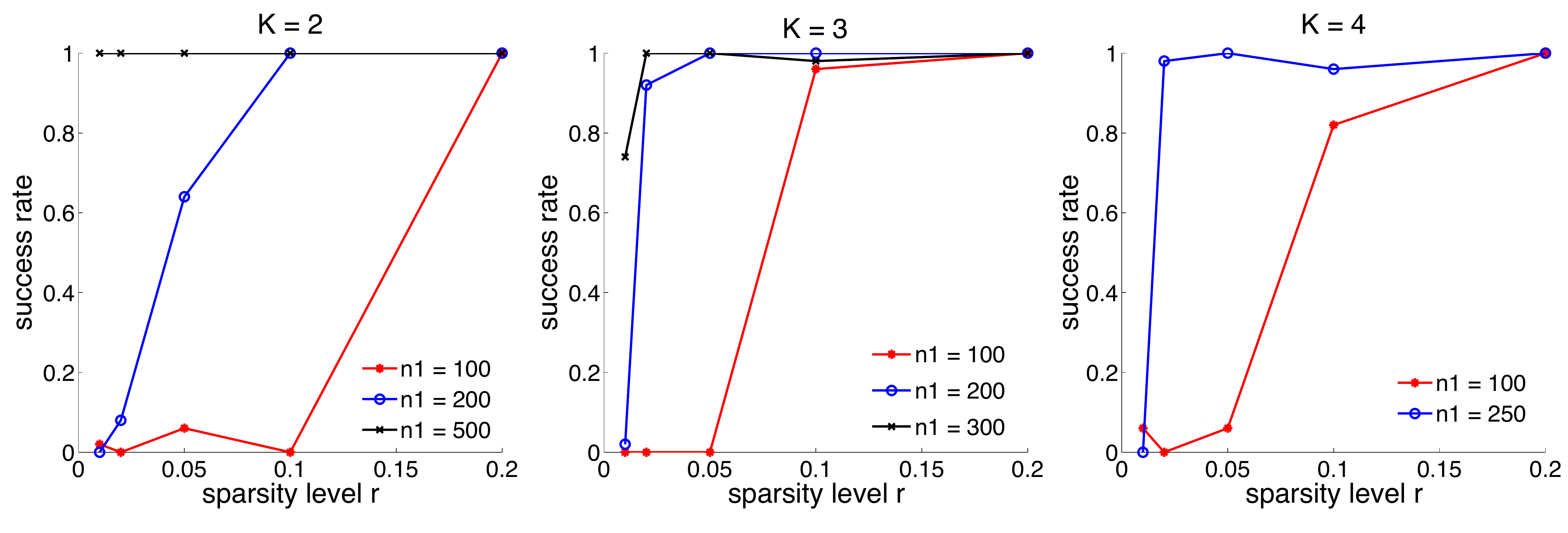}}
\caption{\label{fig:simu1}Results for {\bf Simulation 1}: reporting the proportion of
correct estimate
of $K$ for stochastic block models, for $K= 2 ,3, 4$, under various
sparsity levels $r \in \{0.01, 0.02,
0.05, 0.1, 0.2\}$, and various sizes of the first community $n_1$. The
number of nodes
is $1000$.
}
\end{figure}

\noindent{\bf Simulation 2: general block structures and comparison to recursive
bipartition.} This simulation is designed to further investigate the
proposed NCV method under general block structures of networks,
and meanwhile to compare the
proposed NCV method with the recursive testing procedure proposed in
\cite{BickelS13test}. We generate symmetric $B$ randomly as follows. For each
upper-triangle entry of $B$, we generate a random number from ${\rm Unif}(0, 0.5)$. The
upper bound $0.5$ is set to exclude unrealistically dense networks that are of less
interest. We only use $B$ matrices whose $K$th singular values are in the upper three
quarters
and therefore have relatively well-formed $K$-block structures. The membership vector
$g$ is generated from multinomial distribution $(n,\pi)$ with equal probability $\pi
= (1/K,\dots,1/K)$. For each simulated data, we applied three-fold NCV method as well
as the recursive bipartition algorithm developed in \cite{BickelS13test} with $\alpha
= 0.01$. The basic idea of the recursive bipartition method is to divide the nodes
into two clusters if $K=1$ is rejected at level $\alpha$, and then recursively test
$K=1$ vs $K>1$ on each of the two sub-networks until failing to reject $K=1$.  The
success rates in 50 simulations for each combination of $n = 600, 1200$ and $K = 1,
2, 3,
4$ are shown in \Cref{fig:simu2}. As expected, both methods benefit from a larger
sample size (top row vs
bottom row). The task of determining $K$ gets harder as the true number of communities
gets larger (from left column to right column).
The proposed NCV method performs uniformly better
than the bipartition method.  The NCV method is also much faster than
the bipartition method, where the latter requires a small bootstrap sample to adjust the null
distribution at each testing step.  The simulation design
suggests that the proposed NCV has very
satisfactory performance
under very general structures of $B$. For $n=1200$, the empirical success rate
of NCV
achieves 100\% for $K=1,\ 2$, 84\% for $K=3$, and 72\% for $K=4$.

\begin{figure}[t]
\centering
\makebox{\includegraphics[scale=0.6]{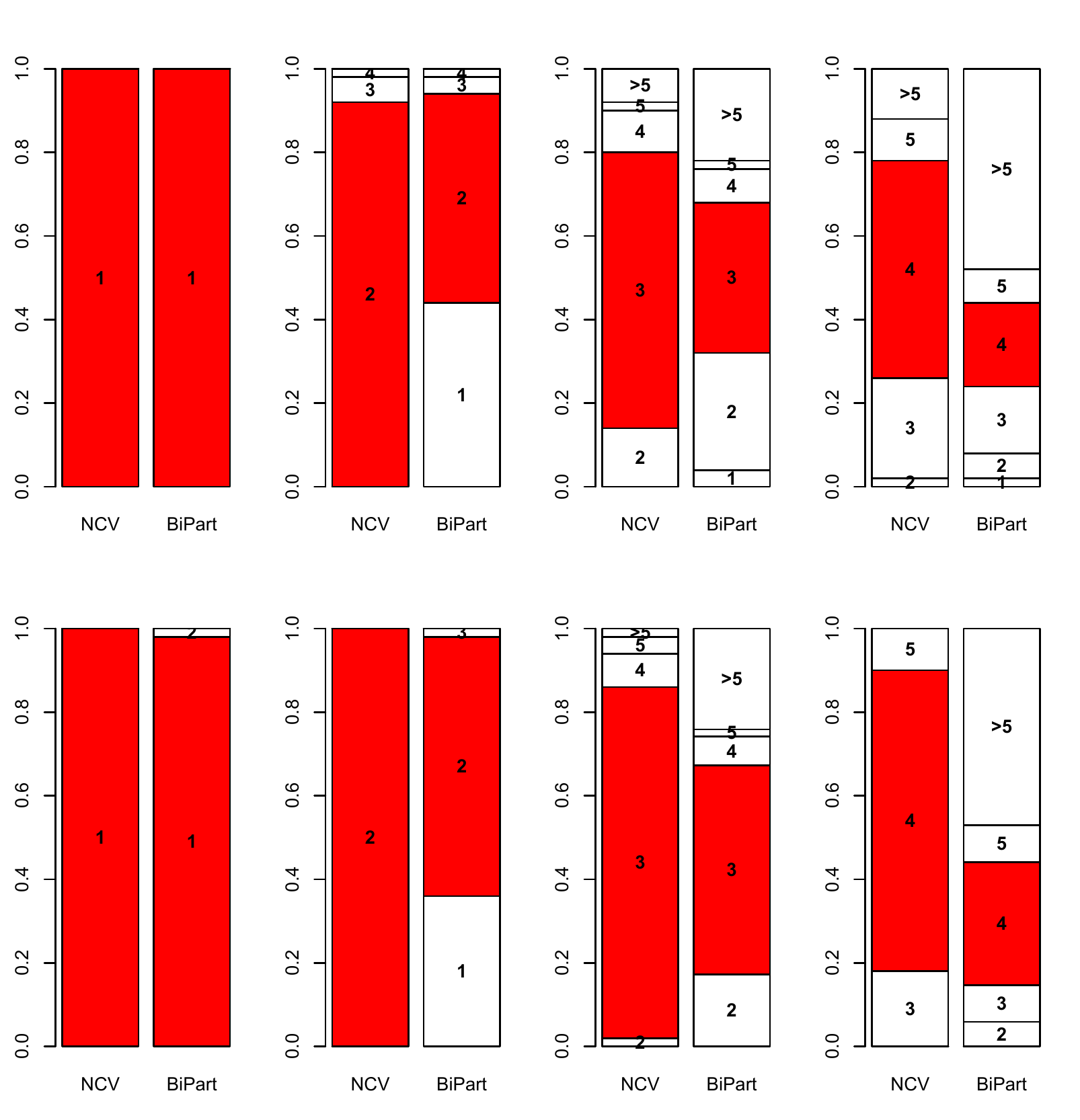}}
\caption{\label{fig:simu2}Results for {\bf Simulation 2}: reporting the proportion of
selected $K$ by NCV (the
proposed method) and BiPart (\cite{BickelS13test}), for true $K= 1, 2 ,3, 4$ (from left
to right), and sample size $n = 600$ (top row) and $n= 1200$ (bottom row).
}
\end{figure}

\noindent{\bf Simulation 3: degree corrected block models.} This simulation is designed
to demonstrate the performance of selecting between the stochastic block model and the
degree-corrected block model with simultaneous selection of $K$. We use a $B$ matrix
whose diagonal is 0.25 and off-diagonal is 0.1, which gives a moderate sparsity level
for stochastic block models. For degree-corrected block model, the degree parameter
$\psi$ is generated from ${\rm Unif}(0.2, 1)$, and normalized to have block-wise
maximum value 1. The edges are generated according to \Cref{eq:dcbm}. The network is
much sparser in presence of the degree parameter $\psi$ and the inference
problem is harder.
Three-fold NCV is used to simultaneously choose the model type $T$ from
$T=\text{``SBM''}$ or $T=\text{``DCBM''}$,
 and the number of
communities
$K$. \Cref{tab:simu3} shows the proportion of correct model type selection $\hat T=T$
and  proportion of correct choice of $K$ given correct model type selection.
Data are generated 50 times from both the
stochastic block model and the degree corrected block model,
for each combination of $K = 1, 2, 3, 4$ and
$n = 300, 600, 1200$. We observe that when the true model type
is stochastic block model, NCV can almost perfectly pick out the correct model
and correct $K$ for various combinations of $K$ and $n$. As expected, a relatively
larger
sample size is needed to get good performance when the network is generated from a
degree corrected block model. Our simulation shows that for $n=1200$, NCV can
almost always pick out the correct DCBM model with the right $K$.

\begin{table}
\caption{\label{tab:simu3}Results for {\bf Simulation 3}: proportion of selecting
the correct model type, and choosing the correct $K$ given correct model type
selection, from 50 independent simulations. The true models are generated from
stochastic block models (SBM)
or degree corrected block models (DCBM), for true $K = 1, 2, 3, 4$ and $n = 300, 600,
1200$. }
\centering
\fbox{%
\begin{tabularx}{0.96\textwidth}{cccccccccc}
 \multicolumn{2}{c}{}&\multicolumn{4}{c}{\it SBM }&\multicolumn{4}{c}{\it DCBM}\\
\multicolumn{2}{c}{}& $K =1$ & $2$ & $3$ & $4$ & $K=1$ & $2$ & $3$ &$4$ \\
\hline
$n = 300$ & $\hat P(\hat T=T)$ & 1 & 1& 1& 1 &1 &0.68 & 0.44 & 0.42  \\
  & $\hat P(\hat K=K|\hat T=T)$ & 1 & 1& 0.98& 0.92& 1 & 0.41 & 0 & 0 \\
\hline
$n = 600$ & $\hat P(\hat T=T)$ & 1 & 1& 1& 1 &1 &1 & 0.96 & 0.98  \\
  & $\hat P(\hat K=K|\hat T=T)$ & 1 & 1& 1& 0.98& 1 & 1 & 0.42 & 0 \\
 \hline
 $n= 1200$& $\hat P(\hat T=T)$ & 1 & 1& 1& 1 &1 &1 & 1 & 1 \\
  & $\hat P(\hat K=K|\hat T=T)$ & 1 & 1& 1& 0.98 &1 &1 & 1 & 1\\
\end{tabularx}}
\end{table}

\noindent\textbf{Data example: political weblogs.} The political blog data was collected
and analyzed in \cite{PolBlog}.  The data set contains snapshots of over one thousand
weblogs shortly before the 2004 U.S. Presidential Election, where the nodes are weblogs, and
edges are hyperlinks.  The nodes are labeled as being either liberal or conservative, which can be treated as two well-defined communities. The degree corrected block model is believed to fit better than the stochastic block model to this data with two communities \citep{KarrerN11,ZhaoLZ12,Jin12}.  To illustrate the NCV method
for simultaneously choosing between the regular stochastic block model and the degree
corrected block model, and choosing the number of communities $K$, we apply
three-fold NCV
to the largest connected component in the network which contains 1222 nodes.  The NCV method consistently selects the degree corrected block model
with two communities.  The cross-validated negative log-likelihood for all candidate models
is plotted in \Cref{fig:PolBlog} for a typical block splitting.
We repeated the NCV selection 100 times using independent random
block splittings.
The NCV method selected DCBM and $K=2$ in 99 out of 100 repetitions,
where the one failure was due to non-convergence of
$k$-means in spectral clustering.

\begin{figure}[t]
\centering
\makebox{\includegraphics[scale=0.7]{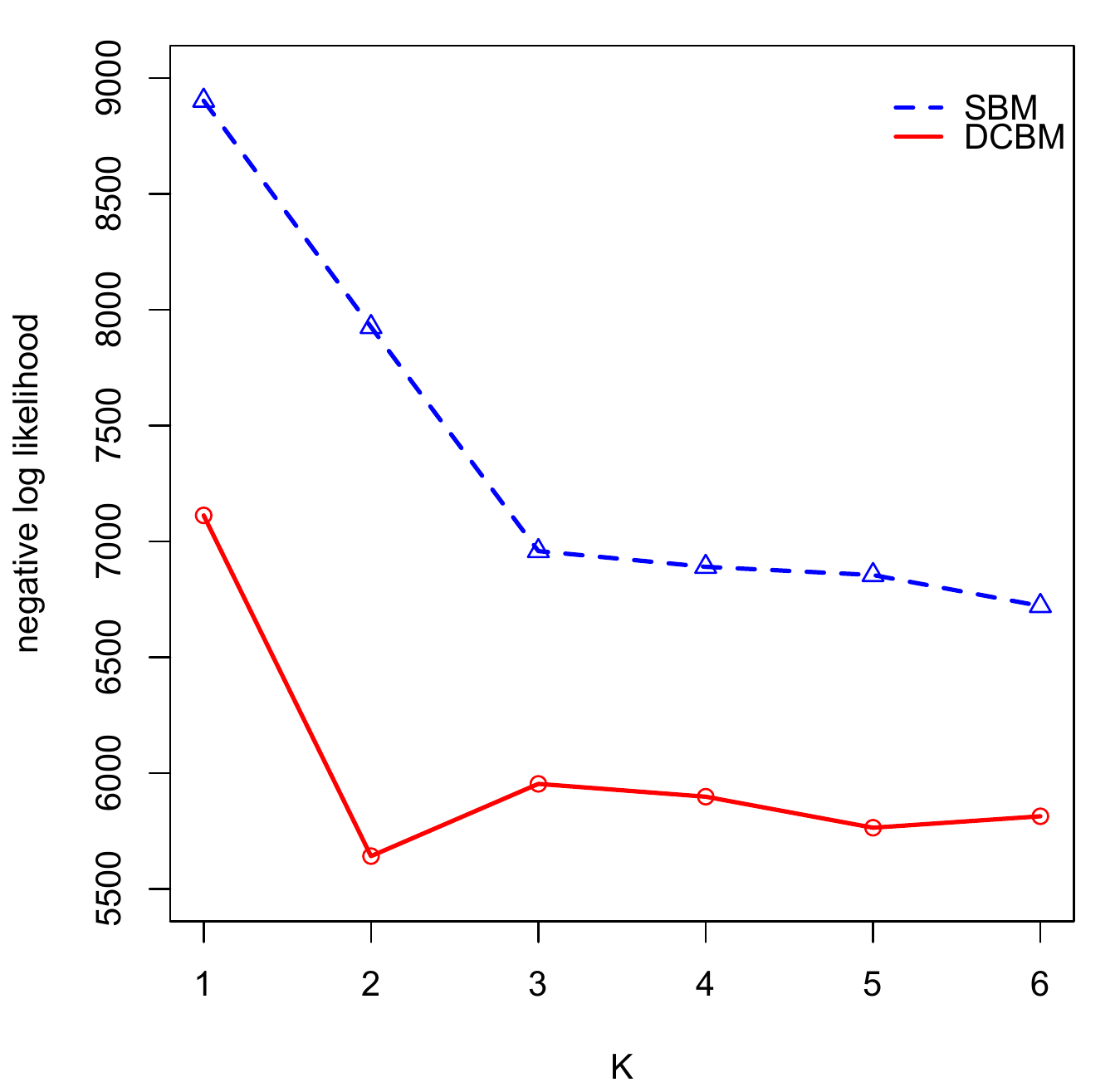}}
\caption{\label{fig:PolBlog}Results for {\bf the political blogs data}: reporting the
three-fold cross-validated negative
log-likelihood of all candidate models from one random block splitting.
Dashed line: stochastic block models;
solid line: degree corrected block models. The results are consistent over
100 repeated random block splittings.
}
\end{figure}

\section{Discussion}
\label{sec:discussion}
\paragraph{Further extensions}
In general, the network cross-validation approach proposed in this paper
is applicable
to network models where (i) edges form independently given an appropriate set of model 
parameters; 
and (ii) the edge probabilities can
be estimated accurately using a subset of rows of the adjacency matrix.
The stochastic block model and the degree corrected model are good examples satisfying these two properties.  There are
other popular network models in this category, such as the random dot-product graph.
 The random dot-product graph model 
\citep{Young07} assumes that each node $i$ has an embedding $v_i$ on a subset
of the $d$-dimensional
unit sphere, and that given the embedding
the edge between node $i$ and node $j$ is an independent Bernoulli random
variable with parameter $\langle v_i,v_j\rangle$.
This is a special case of the latent space model \citep{Hoff02}.
The latent vectors can be accurately estimated using spectral methods \citep{Sussman13},
which can be adapted naturally so that the model parameters can be estimated using only the fitting subset of
node pairs in $A$.

\paragraph{Effect of the number of folds}
In general, cross validation methods are insensitive to the number of folds.
The same intuition empirically holds true for the proposed NCV method.  
However, there is
slight difference between the NCV framework and the traditional cross-validation.
Unlike traditional cross validation where each data point is
included in a testing sample, In V-fold NCV only the diagonal
blocks are used as testing samples and hence the proportion of edges included in testing
samples is roughly $1/V$.
On the other hand, the ratio between the sizes of fitting and
testing samples in a single fold is $(V^2 - 1)$ to 1 for NCV, and $(V-1)$ to 1 for 
traditional cross-validation. 
Roughly speaking,
having a larger value of $V$ will rapidly increase the estimation accuracy in the 
fitting stage
but will reduce the testing sample size.  
In our numerical experiments we found $V=3$ a reasonable choice for
most cases, which is roughly comparable to a 9-fold traditional cross validation in 
terms of the fitting and testing sample size ratio.  

\appendix
\section{Proofs}\label{sec:proof}

\subsection{Proof of Theorem 2}
\begin{proof}
We focus on a single fold in NCV. The claimed results follow by summing over all folds.  
Again let $\mathcal N_1$ be the nodes corresponding to the fitting part and
$\mathcal N_2$ be those in the testing part.
Let $I_{l}=\{i\in\mathcal N_2:g_i=l,\}$ ($1\le l\le K$) and $\hat I_{k}= \{i\in \mathcal N_2:\hat g_i=k\}$ ($1\le k\le \tilde{K}$).

\textbf{Case 1: $\tilde{K}<K$}.

According to tail probability inequalities for hypergeometric distributions (see \Cref{lem:split}), with overwhelming probability over the random data splitting, the testing block $A^{(vv)}$ is also a realization of a $K$-block SBM satisfying assumptions A1--A2,
with a different constant $\pi_0'$.

When $\tilde{K}<K$,  there exist $1\le k\le \tilde{K}$, $1\le l_1<l_2\le K$ such that
  $|\hat I_k \cap I_{l_j}|\ge |I_{l_j}|/\tilde{K}\ge \pi_0' n/|\tilde{K}|$ for $j=1,2$.
  Because $B_0$ does not have identical rows, there exists $1\le l_3\le K$ such that
  $B_0(l_1, l_3)\neq B_0(l_2,l_3)$. There exists a $k'$ such that $|\hat I_{k'}\cap I_{l_3}|\ge
  |I_{l_3}|/\tilde{K}\ge \pi_0' n/\tilde{K}$.
  We focus on the case $k\neq k'$ and $l_1,l_2,l_3$ are distinct.  The other cases, such as $k=k'$
  or $l_1=l_3$, or both, can be dealt with similarly.
  Without loss of generality, assume $k=1,k'=2$, $l_1=1,l_2=2,l_3=3$.

    Let
$\mathcal T_{k,k',l,l'}$  be the set of unique pairs $(i,j)$ such that
$i\in \hat I_{k} \cap I_l$, $j\in \hat I_{k'}\cap I_l$, and $i\neq j$ (that is, $(i,j)$ and
$(j,i)$ shall be counted as one unique pair in case of $k=k'$ and $l=l'$).
Let $\hat p$ be the average of $A_{ij}$ over $(i,j)\in \mathcal T_{1,2,1,3}\cup\mathcal T_{1,2,2,3}$
and $\hat p_{k,k',l,l'}$ be the average of $A_{ij}$ over $(i,j)\in\mathcal T_{k,k',l,l'}$.
Then, letting $L^{(v)}(A)$ be the terms in $L(A)$ corresponding to the $v$th fold, \begin{align*}
  &\hat L^{(v)}(A,\tilde{K})-L^{(v)}(A)\\=&\sum_{k,k',l,l'}\sum_{(i,j)\in \mathcal T_{k,k',l,l'}}\left[(A_{ij}-\hat P_{ij})^2-(A_{ij}-P_{ij})^2\right]\\
  =&\sum_{k,k',l,l'}\sum_{(i,j)\in \mathcal T_{k,k',l,l'}}\left[(A_{ij}-\hat B_{k,k'})^2-(A_{ij}-B_{l,l'})^2\right]\\
  =&\sum_{(i,j)\in \mathcal T_{1,2,1,3}}\left[(A_{ij}-\hat B_{1,2})^2-(A_{ij}-B_{1,3})^2\right]
  +\sum_{(i,j)\in \mathcal T_{1,2,2,3}}\left[(A_{ij}-\hat B_{1,2})^2-(A_{ij}-B_{2,3})^2\right]\\
  &+\quad
  \sum_{(k,k',l,l')\notin \{(1,2,1,3),(1,2,2,3)\}}
  \sum_{(i,j)\in \mathcal T_{k,k',l,l'}}\left[(A_{ij}-\hat B_{k,k'})^2-(A_{ij}-B_{l,l'})^2\right]\\
  \ge & \sum_{(i,j)\in \mathcal T_{1,2,1,3}}
  \left[(A_{ij}-\hat p)^2-(A_{ij}-B_{1,3})^2\right]
    +\sum_{(i,j)\in \mathcal T_{1,2,2,3}}\left[(A_{ij}-\hat p)^2-(A_{ij}-B_{2,3})^2\right]\\
    &+\quad
    \sum_{(k,k',l,l')\notin \{(1,2,1,3),(1,2,2,3)\}}
    \sum_{(i,j)\in \mathcal T_{k,k',l,l'}}\left[(A_{ij}-\hat p_{k,k',l,l'})^2-(A_{ij}-B_{l,l'})^2\right]\\
    = & I + II + III.
\end{align*}
Let $\lambda=|\mathcal T_{1,2,1,3}|/(|\mathcal T_{1,2,1,3}+\mathcal T_{1,2,2,3}|)$.
Then $\pi_0'^2/(\tilde{K}^2+\pi_0'^2)\le \lambda\le \tilde{K}^2/(\tilde{K}^2+\pi_0'^2)$,
and $\hat p = \lambda \hat p_{1,2,1,3}+(1-\lambda)\hat p_{1,2,2,3}$.
If $|\mathcal T_{k,k',l,l'}|>0$, using Bernstein's inequality we have
\begin{align}\label{eq:bern}
|\hat p_{k,k',l,l'} - B_{l,l'}|=O_P\left(\sqrt{\frac{\rho_n}{|\mathcal T_{k,k',l,l'}|}}\right)\,.
\end{align}
\begin{align*}
  I = & |\mathcal T_{1,2,1,3}|\left[(\hat p - \hat p_{1,2,1,3})^2 - (\hat p_{1,2,1,3}-B_{1,3})^2\right]\\
  =& |\mathcal T_{1,2,1,3}|\left[(1-\lambda)^2(\hat p_{1,2,1,3}-\hat p_{1,2,2,3})^2
  -(\hat p_{1,2,1,3}-B_{1,3})^2\right]\\
  \ge & |\mathcal T_{1,2,1,3}|\bigg[\frac{(1-\lambda)^2}{2}( B_{1,3}- B_{2,3})^2
  - (1-\lambda)^2\left((\hat p_{1,2,1,3}-B_{1,3}) - (\hat p_{1,2,2,3}-B_{2,3})\right)^2\\
  &\qquad\qquad
  -(\hat p_{1,2,1,3}-B_{1,3})^2\bigg]\\
  \ge & |\mathcal T_{1,2,1,3}|\bigg[\frac{(1-\lambda)^2}{2}( B_{1,3}- B_{2,3})^2
  - 2(1-\lambda)^2\left(\hat p_{1,2,1,3}-B_{1,3}\right)^2 \\
  &\qquad\qquad - 2(1-\lambda)^2\left(\hat p_{1,2,2,3}-B_{2,3}\right)^2
  -(\hat p_{1,2,1,3}-B_{1,3})^2\bigg]\\
  \ge & c (n\rho_n)^2 + O_P(\rho_n)\,.
\end{align*}
where the constant $c$ depending on $\pi_0$, $B_0$ and $\tilde{K}$ only,
the first term in the last inequality comes from the fact that $|\mathcal T_{1,2,1,3}|\ge \pi_0^2n^2/\tilde{K}^2$ and $|B_{1,2}-B_{1,3}|\ge c' \rho_n$ for some $c'$ depending only on $B_0$,
the second term comes from \eqref{eq:bern} and the fact that $|T_{1,2,1,3}|\asymp |T_{1,2,2,3}|$.

Similarly,
$$
II\ge  c (n\rho_n)^2 + O_P(\rho_n)\,.
$$
and
\begin{align*}
  III \ge -\sum_{(k,k',l,l')\notin \{(1,2,1,3),(1,2,2,3)\}}
    |\mathcal T_{k,k',l,l'}|(\hat p_{k,k',l,l'}-B_{l,l'})^2 = O_P(\rho_n)\,.
\end{align*}

\textbf{Case 2: $\tilde{K}=K$ with exactly consistent recovery}\\
In this case we focus on the event $\hat g = g$, which has $1-o(1)$ probability.
Then $\epsilon_n\coloneqq\sup_{1\le l,l'\le K}|\hat B_{l,l'}-B_{l,l'}|=O_P(\sqrt{\rho_n}/n)$
by Bernstein's inequality and
$\hat B_{l,l'}$ are independent with $A_{ij}$ for $(i,j)$ in the testing set.

For $1\le k,k'\le K$, let $\mathcal T_{l,l'}$ be the collection of pairs $(i,j)$ in the testing set such that
$i\in I_k$, $j\in I_{k'}$ and $i\neq j$.
\begin{align*}
  \left|\hat L^{(v)}(A,\tilde{K})-L^{(v)}(A)\right| \le & \sum_{k,k'} \sum_{(i,j)\in\mathcal T_{k,k'}}
  \left|(A_{ij}-\hat B_{k,k'})^2 - (A_{ij}-B_{k,k'})^2\right|\\
  = & \sum_{k,k'} \sum_{(i,j)\in\mathcal T_{k,k'}}
    \left|-2A_{ij}(\hat B_{k,k'}-B_{k,k'})+(\hat B_{k,k'}+B_{k,k'})(\hat B_{k,k'}-B_{k,k'})\right|\\
  \le &
  \sum_{k,k'} \sum_{(i,j)\in\mathcal T_{k,k'}}
    \left[2\epsilon_n A_{ij}+(2\rho_n+\epsilon_n)\epsilon_n\right]\\
    \le & O_p(n^2\rho_n)\epsilon_n+n^2(2\rho_n+\epsilon_n)\epsilon_n=O_P(n\rho_n^{3/2})\,.
\end{align*}
Because $n\rho_n^{3/2}=o((n\rho_n)^2)$, we have
\begin{align*}
P(\hat K< K)\le &\sum_{\tilde{K}< K}P\left[\hat L(A,\tilde{K})<\hat L(A,K)\right]  \\
\le &\sum_{\tilde{K}<K} \sum_{1\le v\le V}P\left[\hat L^{(v)}( A,\tilde{K})<\hat L^{(v)}(A,K)\right]
=o(1)\,.
\end{align*}
This proves part (a) of \Cref{cor:underestimation}.

\textbf{Case 3: $\tilde{K}=K$ with approximately consistent recovery}

  Without loss of generality we assume the optimal permutation to match $\hat g$
  and $g$ is the identity.
  Let $\hat{\mathcal U}_{k,k'}$ be the set of distinct pairs $(i,j)$ in the fitting
  subset such that $\hat g_i=k$, $\hat g_i=k'$, and $i\neq j$.
  Let $\mathcal U_{k,k'}$ be the set of distinct pairs $(i,j)$ in the fitting
  subset such that $g_i=k$, $g_i=k'$, and $i\neq j$.
  Let $S=\bigcup_{k=1}^K \hat I_{k}\Delta I_{k}$, $\mathcal S=\bigcup_{k,k'}\hat{\mathcal U}_{k,k'}\Delta \mathcal U_{k,k'}$.
  On the event $|S|\le \eta_n n$, under Assumptions (A1) and (A2) we
  have $|\mathcal S|= O(\eta_n n^2)$, and hence
  $|\hat{\mathcal U}_{k,k'}|=|\mathcal U_{k,k'}|(1+O(\eta_n))$ for all $k,k'$.
  \begin{align*}
    &\left|\hat B_{k,k'}-B_{k,k'}\right|\\=&\left|
    \frac{1}{|\hat {\mathcal U}_k|}\left(\sum_{(i,j)\in \mathcal U_{k,k'}} A_{ij}\right)-B_{k,k'}\right|+\frac{1}{|\hat{\mathcal U}_{k,k'}|}\sum_{(i,j)\in
    \hat{\mathcal U}_{k,k'}\Delta \mathcal U_{k,k'}} A_{ij}\\
    \le& \frac{|\mathcal U_{k,k'}|}{|\hat{\mathcal U}_{k,k'}|}
    \left|\frac{1}{|\mathcal U_{k,k'}|}\sum_{(i,j)\in\mathcal U_{k,k'}}A_{ij}-B_{k,k'}\right|
    +\left|\frac{|\mathcal U_{k,k'}|}{|\hat{\mathcal U}_{k,k'}|}-1\right|B_{k,k'}
    +\frac{1}{|\hat{\mathcal U}_{k,k'}|}\sum_{(i,j)\in
    \hat{\mathcal U}_{k,k'}\Delta \mathcal U_{k,k'}} A_{ij}\\
    \le& (1+O(\eta_n)) O_P\left(\sqrt{B_{k,k'}/|\mathcal U_{k,k'}|}\right)
    +O(\eta_n)B_{k,k'}+(1+O(\eta_n))\eta_n\\
    = &O_P(\rho_n^{1/2}n^{-1}+\eta_n)\,.
  \end{align*}
  Focusing on the $v$th fold in cross-validation, we similarly have, letting $\epsilon_n=\sup_{k,k'}|\hat B_{k,k'}-B_{k,k'}|$,
\begin{align*}
  \left|\hat L^{(v)}(A,\tilde{K})-L^{(v)}(A)\right|
    \le & O_p(n^2(\rho_n+\epsilon_n)\epsilon_n)=O_p(n\rho_n^{3/2}+n^2\rho_n\eta_n)\,.\qedhere
\end{align*}
\end{proof}

\subsection{Proof of \Cref{thm:sbm-consist} and \Cref{thm:dcbm}}
Next we prove \Cref{thm:sbm-consist} and \Cref{thm:dcbm}.
Let $P$ be the $n\times n$ matrix such that $P_{ij}=B_{g_i g_j}$.
For a particular fold split in NCV,
let $A^{(1)}$ and $P^{(1)}$ be the testing rectangular submatrix of $A$ and $P$ respectively.
Let $\mathcal N_{j,k}^*$ be the nodes in subsample $\mathcal N_j$ belonging to
community $k$ and $n_{j,k}^*=|\mathcal N_{j,k}^*|$ ($j=1,2$, $k=1,...,K$).
For any matrix $M$, let $\sigma_K(M)$ be its $K$th largest singular value.
In the statement of results and the proof, constants $c$, $C$ may take different
values from line to line. We let $\|M\|=\sigma_1(M)$ be the spectral norm of $M$
and $\|M\|_F=(\sigma_1^2(M)+\sigma_2^2(M)+...)^{1/2}$ be the Frobenious norm. 

\subsubsection{Preliminary results}
The following technical lemmas are needed in the proof of \Cref{thm:sbm-consist} and \Cref{thm:dcbm}. 

\begin{lemma}[Size of split community]\label{lem:split}
Under Assumption (A1),
for $n$ large enough we have $\min_{k}n_{1,k}^*\ge \pi_0 n/(2V)$,
with probability at least $1-n^{-1}/2$.
\end{lemma}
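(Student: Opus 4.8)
The plan is to recognize $n_{1,k}^*$ as a hypergeometric count and to control its lower tail by a Chernoff-type bound, then close with a union bound over the $K$ communities. Fix a community $k$ and note that, under (A1), its total size $m_k$ satisfies $m_k \ge \pi_0 n$. The $V$-fold splitting assigns the $n$ nodes to equal-sized subsets uniformly at random, so $\mathcal N_1$ is (up to integer rounding, which is harmless for $n$ large) a uniformly random subset with $|\mathcal N_1|\ge n/V$. Consequently $n_{1,k}^* = |\mathcal N_{1,k}^*|$, the number of community-$k$ nodes landing in $\mathcal N_1$, is hypergeometric: it is the count obtained by drawing $|\mathcal N_1|$ of the $n$ nodes without replacement, $m_k$ of which are of type $k$. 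Its mean is $\mu_k = |\mathcal N_1|\,m_k/n \ge \pi_0 n/V$.

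First I would establish the exponential lower-tail estimate. Because sampling without replacement is at least as concentrated as sampling with replacement (the classical domination result of Hoeffding; equivalently one may invoke Serfling's hypergeometric tail inequality), the multiplicative Chernoff bound applies verbatim to $n_{1,k}^*$: for $\delta\in(0,1)$, $P\big(n_{1,k}^* \le (1-\delta)\mu_k\big) \le \exp(-\delta^2\mu_k/2)$. Taking $\delta=1/2$ and using $\mu_k \ge \pi_0 n/V$ gives $(1-\delta)\mu_k = \mu_k/2 \ge \pi_0 n/(2V)$, whence $P\big(n_{1,k}^* < \pi_0 n/(2V)\big) \le \exp(-\pi_0 n/(8V))$.

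The remaining step is a union bound over the $K$ communities: $P\big(\min_k n_{1,k}^* < \pi_0 n/(2V)\big) \le K\exp(-\pi_0 n/(8V))$. Since $K$, $\pi_0$, and $V$ are fixed constants, this upper bound is $o(n^{-1})$, and in particular is at most $n^{-1}/2$ for all $n$ large enough, which is exactly the claim. (If the same guarantee were wanted simultaneously for all $V$ folds, one absorbs an extra factor of $V$ into the union bound, which does not affect the conclusion.)

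There is no genuine obstacle here; this is a routine concentration estimate, and the only point requiring a little care is that the split draws nodes \emph{without} replacement, so the i.i.d.\ Chernoff bound cannot be invoked directly. I would license the same exponential tail either by citing that the hypergeometric is dominated in convex order by the corresponding binomial, or by appealing to a standard hypergeometric (Serfling-type) inequality. The rounding of $n/V$ to an integer perturbs $\mu_k$ by only $O(1)$, which is negligible against the $\Omega(n)$ deviation being tracked.
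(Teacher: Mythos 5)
Your proposal is correct and follows exactly the route the paper indicates: the paper omits the proof, stating only that it ``follows from a simple application of large deviation bounds for hypergeometric random variables combined with union bound,'' which is precisely your hypergeometric Chernoff/Serfling estimate followed by a union bound over the $K$ communities. The exponential tail you obtain is far stronger than the required $n^{-1}/2$, so the conclusion holds for $n$ large enough as claimed.
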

The proof of this lemma follows from a simple application of
large deviation bounds for hypergeometric random variables \citep{Skala13} combined with
union bound and is omitted.

\begin{lemma}[Spectral norm error of partial adjacency 
  matrix]\label{lem:spec-bound}
Let $A$ be the adjacency matrix generated from a degree corrected block
model satisfying Assumption A2, with $\rho_n\ge c\log n/n$ for a 
positive constant $c$. Let $\tilde A$ be an arbitrary subset of rows 
of $A$ and $\tilde P$ be the corresponding submatrix of $P$.
We have, for some constant $C$,
  $$
  P\left(\|\tilde A-\tilde P\|\le C\sqrt{n\rho_n}\right)\ge 
  1-n^{-1}/2\,.
  $$
\end{lemma}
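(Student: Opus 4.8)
The plan is to reduce the rectangular, row-subset bound to a spectral-norm bound on the full symmetric matrix $A-P$, and then to establish the latter through a sharp concentration inequality for sparse random graphs. For the reduction, I would write $\tilde A = RA$ and $\tilde P = RP$, where $R$ is the row-selection matrix whose rows are the distinct standard basis vectors $e_i^\top$ indexing the chosen subset of rows. Then $\tilde A - \tilde P = R(A-P)$, and since the rows of $R$ are orthonormal we have $\|R\| = 1$; sub-multiplicativity of the spectral norm gives $\|\tilde A-\tilde P\| = \|R(A-P)\| \le \|A-P\|$. Thus the stated bound for an arbitrary subset of rows follows, on the very same event, from the corresponding bound for the full matrix, with no loss in either the constant or the probability.

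Next I would record the relevant structure of $A-P$. The matrix $A-P$ is symmetric with independent, mean-zero entries on and above the diagonal, each satisfying $|A_{ij}-P_{ij}|\le 1$. Under \eqref{eq:dcbm} and (A2), the identifiability constraint $\max_{i:g_i=k}\psi_i=1$ forces $\psi_i\le 1$, and since the entries of $B_0$ lie in $(0,1]$ we obtain $P_{ij}=\psi_i\psi_j\rho_n B_0(g_i,g_j)\le\rho_n$; consequently each entry has variance $P_{ij}(1-P_{ij})\le\rho_n$, and the maximum expected degree satisfies $\max_i\sum_j P_{ij}\le n\rho_n$. The regular SBM is the special case $\psi\equiv 1$, so the same bound also serves the analysis of \Cref{thm:sbm-consist}; note that (A3), which is a \emph{lower} bound on $\psi$, is not needed here.

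The main step, and the only real obstacle, is the full-matrix concentration bound $\|A-P\|\le C\sqrt{n\rho_n}$ holding with probability at least $1-n^{-1}/2$ throughout the sparse regime $\rho_n\ge c\log n/n$. I would obtain this by invoking a sharp spectral-norm concentration result for adjacency matrices of inhomogeneous random graphs with maximum expected degree $n\rho_n$, such as the bound of \citet{LeiR14} or \citet{Vu14}, choosing the constant $C$ large enough that the failure probability is at most $n^{-1}/2$. I expect this to be the delicate point, because a direct application of the matrix Bernstein inequality only yields the weaker rate $\sqrt{n\rho_n\log n}$: the variance proxy of $A-P$ is of order $n\rho_n$ and the entries are bounded by one, so Bernstein produces a tail of the form $2n\exp(-ct^2/(n\rho_n+t))$, and removing the $n$ prefactor down to the threshold $\rho_n\asymp\log n/n$ forces $t\asymp\sqrt{n\rho_n\log n}$. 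Recovering the stated $\sqrt{n\rho_n}$ rate at the sparsity boundary instead requires the combinatorial, degree-regularization argument underlying the cited theorems, in which the contribution of the few atypically high-degree vertices is controlled separately. Combining this event with the contraction of the first paragraph completes the proof, and a union bound with \Cref{lem:split} then yields probability at least $1-n^{-1}$ for the two events used jointly in the sequel.
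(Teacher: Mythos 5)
Your proposal is correct and follows essentially the same route as the paper: the paper's proof likewise observes that $\|\tilde A-\tilde P\|\le\|A-P\|$ for a row submatrix and then cites Theorem 5.2 of \cite{LeiR14} for the full-matrix bound $\|A-P\|\le C\sqrt{n\rho_n}$ with high probability. Your additional remarks --- the explicit row-selection-matrix argument, the verification that $P_{ij}\le\rho_n$ under the DCBM identifiability constraint, and the observation that a naive matrix Bernstein bound would only give $\sqrt{n\rho_n\log n}$ --- are accurate elaborations of what the paper leaves implicit.
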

\begin{proof}
Observe that $\|\tilde A-\tilde P\|\le \|A-P\|$.  The claimed result follows easily from 
Theorem 5.2 of \cite{LeiR14}, where it has been shown that
  with high probability $\|A-P\|\le C\sqrt{n\rho_n}$.
\end{proof}
\Cref{lem:spec-bound} covers both regular SBM and DCBM. It implies that $\|A^{(1)}-P^{(1)}\|\le C\sqrt{n\rho_n}$ with high probability.

\begin{lemma}[Singular subspace error bound]\label{lem:svd-error}
  Let $\hat M$, $M$ be two matrices of same dimension, and
   $\hat U$ and $U$ be $n\times K$ orthonormal matrices corresponding to the
  top $K$ right singular vectors of $\hat M$ and $M$, respectively.
  Then there exists a $K\times K$ orthogonal matrix $Q$ such that
  $$
  \|\hat U - UQ\|\le \frac{2\sqrt{2}\|\hat M - M\|}{\sigma_K(M)}\,.
  $$
\end{lemma}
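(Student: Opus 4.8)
The plan is to prove the bound in two independent steps: first reduce the quantity $\|\hat U - UQ\|$ (for a well-chosen orthogonal $Q$) to the spectral norm of the sine of the principal angles between $\mathrm{span}(U)$ and $\mathrm{span}(\hat U)$, and then bound that sine by a Davis--Kahan/Wedin perturbation argument. Throughout write $E=\hat M - M$, and let $\theta_1,\dots,\theta_K$ be the principal angles between the two column spaces, so that $\|\sin\Theta(\hat U,U)\|=\sin\theta_{\max}$.

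For the first step I would take $Q$ to be the orthogonal polar factor of the $K\times K$ matrix $U^\top\hat U$, whose singular values are $\cos\theta_1,\dots,\cos\theta_K$. A direct computation gives $(\hat U - UQ)^\top(\hat U - UQ) = 2I - (Q^\top U^\top\hat U + \hat U^\top U\, Q)$, and with this choice of $Q$ the symmetric matrix $Q^\top U^\top\hat U$ has eigenvalues $\cos\theta_i$, so the eigenvalues of $(\hat U-UQ)^\top(\hat U-UQ)$ are $2(1-\cos\theta_i)$. Hence $\|\hat U - UQ\|^2 = 2(1-\cos\theta_{\max})\le 2\sin^2\theta_{\max}$, using $1-\cos\theta\le\sin^2\theta$ for $\theta\in[0,\pi/2]$, which yields $\|\hat U - UQ\|\le \sqrt2\,\|\sin\Theta(\hat U,U)\|$ and in particular the a priori bound $\|\hat U - UQ\|\le\sqrt2$.

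For the second step I would invoke Wedin's $\sin\Theta$ theorem for the right singular subspaces. Because $U$ collects the top $K$ right singular vectors of $M$, the two residual matrices appearing in Wedin's bound are $E$ applied to the top-$K$ left and right singular subspaces of $M$, each of spectral norm at most $\|E\|$, so the numerator is $\|E\|$. The delicate point, and the main obstacle, is putting the denominator into the stated clean form $\sigma_K(M)$: Wedin's gap is naturally a perturbed quantity such as $\sigma_K(\hat M)-\sigma_{K+1}(M)$, which in the intended applications where $M=P^{(1)}$ has rank $K$ reduces to $\sigma_K(\hat M)$, still a perturbed singular value rather than $\sigma_K(M)$. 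I would close this by a case distinction. If $\|E\|>\sigma_K(M)/2$, the claim is immediate since the asserted right-hand side $2\sqrt2\,\|E\|/\sigma_K(M)$ exceeds $\sqrt2$, while $\|\hat U-UQ\|\le\sqrt2$ by the first step. If instead $\|E\|\le\sigma_K(M)/2$, Weyl's inequality gives $\sigma_K(\hat M)\ge\sigma_K(M)-\|E\|\ge\sigma_K(M)/2$, so Wedin yields $\|\sin\Theta(\hat U,U)\|\le \|E\|/(\sigma_K(M)/2)=2\|E\|/\sigma_K(M)$. Combining with the factor $\sqrt2$ from the first step gives $\|\hat U - UQ\|\le 2\sqrt2\,\|\hat M - M\|/\sigma_K(M)$, as claimed; the factor $2$ is exactly the cost of converting the perturbed gap into $\sigma_K(M)$ through Weyl, and the $\sqrt2$ is the cost of the polar-factor reduction.
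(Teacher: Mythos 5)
Your proof is correct and follows essentially the same route as the paper's: Wedin's $\sin\Theta$ theorem combined with Weyl's inequality, plus a case split on whether $\|\hat M-M\|$ exceeds $\sigma_K(M)/2$ so that the perturbed gap can be replaced by $\sigma_K(M)$ at the cost of a factor $2$. Your write-up is in fact more careful than the paper's, which leaves the polar-factor construction of $Q$ and the $\sqrt2$ conversion from $\|\sin\Theta\|$ to $\|\hat U-UQ\|$ implicit (and asserts an a priori bound of $1$ where $\sqrt2$ is the correct constant, harmless given the $2\sqrt2$ in the statement).
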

\begin{proof}
  If $\|\hat M  - M\|\le \sigma_K(M)/2$, then using Wedin $\sin\Theta$ theorem
  \citep{Wedin72} and Weyl's inequality there exists an orthogonal $Q$ such that
  $\|\hat U- UQ\|\le \|\hat M - M\|/(\sigma_K(M)-\|\hat M - M\|)\le 2\|\hat M-M\|/\sigma_K(M)$.
  If $\|\hat M - M\|\ge \sigma_K(M)/2$, then
  $\|\hat U - UQ\|\le 1 \le 2\|\hat M - M\|/\sigma_K(M)$.  
\end{proof}
\textbf{Remark.}  The orthogonal matrix $Q$ will have no particular impact on the argument below. 
  For presentation simplicity, we assume, without loss of generality, that $Q=I$ in the rest
  of the proof.

   \begin{lemma}[Lemma 5.3 of \cite{LeiR14}]\label{lem:k-means}
     Let $\hat U$ and $U$ be two $n\times K$ matrices such that $U$
     contains $K$ distinct rows.  Let $\delta$ be the minimum
     distance between two distinct rows of $U$, and $g$ be the membership
     vector given by clustering the rows of $U$.  Let $\hat g$ be the output 
     of a $k$-means clustering algorithm on $\hat U$, with objective value
     no larger than a constant factor of the global optimum.
     Assume that $\|\hat U-U\|_{F}\le c n \delta$ for some small enough
     constant $c$.
     Then $\hat g$ agrees with $g$ on all but
     $c^{-1}\|\hat U - U\|_{F}\delta^{-1}$ nodes after an appropriate label permutation. 
   \end{lemma}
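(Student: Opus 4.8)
The plan is to convert the near-optimality of the approximate $k$-means solution into a bound on the number of mis-assigned rows, and then to upgrade that count into exact label agreement on the remaining rows using the separation $\delta$. First I would set up the comparison. Let $\hat C$ be the $n\times K$ matrix whose $i$th row is the estimated centroid $\hat c_{\hat g_i}$ that the approximate $k$-means assigns to node $i$, so that the $k$-means objective of $\hat g$ equals $\|\hat U-\hat C\|_F^2$. Since $U$ has exactly $K$ distinct rows, assigning each node to its own row of $U$ is a \emph{feasible} clustering whose objective is $\|\hat U-U\|_F^2$; hence the global optimum is at most $\|\hat U-U\|_F^2$. As $\hat g$ attains objective within a constant factor $\kappa$ of the optimum, $\|\hat U-\hat C\|_F^2\le\kappa\|\hat U-U\|_F^2$, and the triangle inequality gives $\|\hat C-U\|_F\le(1+\sqrt{\kappa})\|\hat U-U\|_F$.

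Next I would count the poorly fit rows. Call a node $i$ \emph{bad} if $\|\hat C_i-U_i\|\ge\delta/2$, where $\hat C_i$ and $U_i$ denote the $i$th rows. A Markov-type inequality then yields
$$
|\{i:\ i\ \text{bad}\}|\le\frac{\|\hat C-U\|_F^2}{(\delta/2)^2}\le\frac{4(1+\sqrt{\kappa})^2\,\|\hat U-U\|_F^2}{\delta^2}\,,
$$
i.e.\ a number of disagreements of order $\|\hat U-U\|_F^2\delta^{-2}$. Under the smallness hypothesis $\|\hat U-U\|_F\le c n\delta$ with $c$ small, this quantity is a small fraction of $n$.

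It then remains to show that every \emph{good} node is labeled correctly after a single relabeling of the $K$ estimated clusters. For a good node $i$, its assigned centroid $\hat c_{\hat g_i}$ lies within $\delta/2$ of the true centroid $U_i$. If two good nodes $i,j$ fall in the same estimated cluster, then their true centroids both lie within $\delta/2$ of the common centroid $\hat c_{\hat g_i}=\hat c_{\hat g_j}$, so $\|U_i-U_j\|<\delta$; since distinct rows of $U$ are at least $\delta$ apart, $i$ and $j$ must belong to the same true community. Thus each estimated cluster containing at least one good node determines a unique true community, defining a map $\phi$ from estimated clusters to communities.

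The main obstacle is to argue that $\phi$ is a bijection, which is what produces the required label permutation. Here I would invoke the smallness assumption: for $c$ small enough the number of bad nodes is strictly below the minimum true community size, so no community can consist entirely of bad nodes, and every community therefore contains a good node and lies in the image of $\phi$. As there are only $K$ estimated clusters mapping onto the $K$ communities, $\phi$ is surjective, hence a bijection; relabeling the estimated clusters by $\phi$ makes $\hat g$ and $g$ agree on all good nodes. Combined with the count of bad nodes from the second step, this gives agreement on all but $O(\|\hat U-U\|_F^2\delta^{-2})$ nodes, as asserted. The delicate interplay throughout is between the separation $\delta$ and the perturbation $\|\hat U-U\|_F$: the constant $c$ must be chosen small enough that the quadratic misclustering bound stays below the smallest cluster size, which is precisely what the hypothesis $\|\hat U-U\|_F\le c n\delta$ supplies.
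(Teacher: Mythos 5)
Your argument is essentially the standard proof of this lemma; the paper itself does not prove it but quotes it from \cite{LeiR14}, and your three steps (compare the achieved $k$-means objective to the feasible solution whose centroids are the $K$ distinct rows of $U$; count via Markov the nodes whose assigned centroid is at least $\delta/2$ from the true row; show the good nodes induce a bijection between estimated clusters and true communities) are exactly the structure of the proof there. Two caveats, both traceable to typos in the statement as reproduced in this paper rather than to your reasoning. First, what you actually prove is agreement on all but $O\bigl(\|\hat U-U\|_F^2\delta^{-2}\bigr)$ nodes, not $c^{-1}\|\hat U-U\|_F\delta^{-1}$ as literally written; the squared bound is the correct and intended one --- it is what \cite{LeiR14} prove, and it is how the lemma is used in the proof of \Cref{thm:sbm-consist}, where $\|\hat U-U\|_F^2\le C/(n\alpha_n)$ and $\delta\asymp n^{-1/2}$ are converted into $C/\alpha_n$ misclustered nodes, i.e.\ $\|\hat U-U\|_F^2\delta^{-2}$. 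Second, your claim that under $\|\hat U-U\|_F\le cn\delta$ the bad-node count is ``a small fraction of $n$'' does not follow: that hypothesis only yields $\|\hat U-U\|_F^2\delta^{-2}\le c^2n^2$, which is not below $n$ for a constant $c$. The bijection step genuinely requires the bad-node count to be smaller than the minimum community size, so the hypothesis you need (and the one actually imposed in \cite{LeiR14}) is $\|\hat U-U\|_F^2\delta^{-2}<\min_k n_k$, i.e.\ $\|\hat U-U\|_F\le c\sqrt{n}\,\delta$ under the balanced-community assumption (A1); this stronger condition does hold in the paper's application of the lemma. With the hypothesis and the conclusion corrected in this way, your proof is complete and correct, and coincides with the cited one.
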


  For the proof of \Cref{thm:dcbm}, we need the analogous version of \Cref{lem:k-means} for the k-median algorithm, which is a simple adaptation of \Cref{lem:k-means}. For a matrix $M$, $\|M\|_{2,1}$ denotes the sum of
   $\ell_2$ norms of the rows in $M$.
   \begin{lemma}\label{lem:k-median}
     Let $\hat U$ and $U$ be two $n\times K$ matrices such that $U$
     contains $K$ distinct rows.  Let $\delta$ be the minimum
     distance between two distinct rows of $U$, and $g$ be the membership
     vector given by clustering the rows of $U$.  Let $\hat g$ be the output 
     of a
     $k$-median clustering algorithm on $\hat U$, with objective value
     no larger than a constant factor of the global optimum.
     Assume that $\|\hat U-U\|_{2,1}\le c n \delta$ for some small enough
     constant $c$ and that $g$ satisfies Assumption A2.
     Then $\hat g$ agrees with $g$ on all but
     $c^{-1}\|\hat U - U\|_{2,1}\delta^{-1}$ nodes after an appropriate label permutation. 
   \end{lemma}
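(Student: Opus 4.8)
The plan is to mirror the argument behind \Cref{lem:k-means}, replacing squared $\ell_2$ distances by plain $\ell_2$ distances throughout; this single change is exactly what turns the Frobenius norm into the $(2,1)$ norm and makes the misclustering bound linear, rather than quadratic, in the perturbation. Write $U_i$, $\hat U_i$ for the $i$th rows of $U$ and $\hat U$, let $\mu_1,\dots,\mu_K$ be the $K$ distinct rows of $U$ so that $U_i=\mu_{g_i}$, and let $\hat\mu_1,\dots,\hat\mu_K$ and $\hat g$ be the centers and assignment returned by the $k$-median algorithm. First I would bound the achieved objective: plugging the true centers $\{\mu_k\}$ and true labels $g$ into the $k$-median objective gives a feasible value $\sum_i\|\hat U_i-\mu_{g_i}\|=\|\hat U-U\|_{2,1}$, so the global optimum is at most $\|\hat U-U\|_{2,1}$, and since $\hat g$ is within a constant factor $C_0$ of optimal, $\sum_i\|\hat U_i-\hat\mu_{\hat g_i}\|\le C_0\|\hat U-U\|_{2,1}$.

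Next I would isolate a small set of problematic nodes. Define $\mathcal B=\{i:\|\hat U_i-\hat\mu_{\hat g_i}\|\ge\delta/4\}\cup\{i:\|\hat U_i-\mu_{g_i}\|\ge\delta/4\}$. A Markov-type bound on each piece---using the objective bound for the first set and $\sum_i\|\hat U_i-\mu_{g_i}\|=\|\hat U-U\|_{2,1}$ for the second---gives $|\mathcal B|\le 4(C_0+1)\|\hat U-U\|_{2,1}/\delta$. For $i\notin\mathcal B$ the triangle inequality yields $\|\hat\mu_{\hat g_i}-\mu_{g_i}\|<\delta/2$; consequently, if two nodes $i,j\notin\mathcal B$ receive the same estimated label then $\|\mu_{g_i}-\mu_{g_j}\|<\delta$, which forces $g_i=g_j$ by the definition of $\delta$. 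Thus the estimated labels appearing outside $\mathcal B$ induce a single-valued map $\pi$ into the true labels.

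The main obstacle is to upgrade $\pi$ to an honest label permutation, and this is where the balanced-block hypothesis enters. Since each true community has $\Omega(n)$ nodes while $|\mathcal B|\le 4(C_0+1)cn$ can be made strictly smaller than the minimum community size by choosing the constant $c$ small enough, every true community has at least one representative outside $\mathcal B$; hence $\pi$ is surjective onto the $K$ true labels, and a surjection from a set of at most $K$ estimated labels onto $K$ true labels must be a bijection. With $\pi$ a permutation, every node outside $\mathcal B$ satisfies $\pi(\hat g_i)=g_i$, so the misclustered set is contained in $\mathcal B$, and the bound $|\mathcal B|\le c^{-1}\|\hat U-U\|_{2,1}\delta^{-1}$ follows after absorbing $4(C_0+1)$ into the constant. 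The only genuinely new ingredient relative to \Cref{lem:k-means} is this bookkeeping with unsquared distances; the bijection argument and the role of balancedness are unchanged.
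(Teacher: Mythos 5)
Your proof is correct and is exactly the ``simple adaptation'' of \Cref{lem:k-means} that the paper invokes without writing out: plug the true centers into the $k$-median objective to bound the achieved cost by $C_0\|\hat U-U\|_{2,1}$, Markov out the set $\mathcal B$ of nodes far (in unsquared $\ell_2$ distance) from either their fitted or true center, and use balancedness of the true communities to upgrade the induced label map to a permutation. The only cosmetic point worth flagging is that the balancedness hypothesis you rely on is Assumption (A1) in the paper's numbering; the statement's reference to ``Assumption A2'' appears to be a typo.
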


\subsubsection{Proof of \Cref{thm:sbm-consist}}
\begin{proof}[Proof of \Cref{thm:sbm-consist}]
    Let $G$ be the $n\times K$ matrix with $G_{ij}=1$ if $j=g_i$, and 
    $G_{ij}=0$ otherwise.  Let $G^{(1)}$ be the submatrix of $G$ containing
    rows in $\mathcal N_1$.  Then
    $P^{(1)}=G^{(1)} B G^T = G \tilde B \tilde G^T$,
    where $\tilde G$  is an $n\times K$ matrix obtained by
    normalizing the columns of $G$. and $\tilde B$ is a $K\times K$
    matrix after corresponding column scaling of $B$.  It is easy to check that
    $\tilde G$ has orthonormal columns and hence the top $K$-dimensional right singular subspace of $P^{(1)}$
    is spanned by $U=\tilde G Q$, for any $K\times K$
    orthogonal matrix $Q$.  Thus $U$ contains $K$ distinct rows
    and the distance between two distinct rows is of order
    at least $1/\sqrt{n}$.
  
    We focus on the event that $\min_{k}n_{1,k}^*\ge \pi_0n/(2V)$ and
    $\|A^{(1)}-P^{(1)}\|\le C\sqrt{n\alpha_n}$, which has probability
    at least $1-n^{-1}$ according to \Cref{lem:split,lem:spec-bound}.
    Then it can be directly verified that $\sigma_K(P^{(1)})\ge C n\alpha_n$
    because the $K$th largest singular values of both $G^{(1)}$ and $G$
    are of order at least $\sqrt{n}$. Then
    \Cref{lem:svd-error} implies that
    $\hat U$ and $U$, the matrices consisting of $n\times K$ top singular
    vectors of $A^{(1)}$ and $P^{(1)}$, satisfy, with appropriate choice of 
    $Q$,
    $$
    \|\hat U - U\|_F^2\le C\frac{1}{n\alpha_n}\,.
    $$
    
  Then applying \Cref{lem:k-means}, we know that the $k$-means clustering
    algorithm misclusters no more than $C/\alpha_n$ nodes.
\end{proof}

\subsubsection{Proof of \Cref{thm:dcbm}}\label{sec:proof-dcbm}
\begin{proof}[Proof of \Cref{thm:dcbm}]
Let $\Psi$ be an $n\times K$ matrix such that $\Psi_{ij}=\psi'_i$ if
$j=g_i$ and $\Psi_{ij}=0$ otherwise. Let $\Psi^{(1)}$ be the corresponding
submatrix of $\Psi$ with rows in $\mathcal N_1$.  Then
$P^{(1)}=\Psi^{(1)} \tilde B \Psi$, where $\tilde B$ is a $K\times K$
matrix obtained after corresponding row/column scaling of $B$.
It is easy to check that $\Psi$ is orthonormal so that the top $K$-dimensional
right singular subspace of $P^{(1)}$ is spanned by $U =\Psi Q$ for any
$K\times K$ orthogonal $Q$.  It follows that the norm of
$i$th row of $U$ is $\psi'_i$, and that any two rows of $U$ 
in distinct communities are orthogonal. Let $\tilde U$ and $\tilde U^*$ be 
the row-normalized versions of $\hat U$ and $U$, respectively. 
Then $\tilde U^*$ contains $K$ distinct rows and the distance between
any two distinct rows of $\tilde U^*$ is $\sqrt{2}$.

We will focus on the event that $\min_{k}n_{1,k}^*\ge c_0\pi_0n/2$ and
  $\|A^{(1)}-P^{(1)}\|\le C\sqrt{n\alpha_n}$, which has probability
  at least $1-n^{-1}$ according to \Cref{lem:split,lem:spec-bound}.
Applying \Cref{lem:svd-error} we have,
$$
\|\hat U - U\|_F^2 \le C/(n\alpha_n)\,,
$$
 where we take the matrix $Q$ in \Cref{lem:svd-error} as identity.
Because the minimum row norm of $U$ is at least $\psi_0/\sqrt{n}$, the number of zero rows in $\hat U$ is at most
$\|\hat U - U\|_F^2 / (\psi_0/\sqrt{n})^2 = O(\alpha_n^{-1})=o(\sqrt{n/\alpha_n})$.  In the rest of the proof we can safely
assume that $\hat U$ has no zero rows.

Now let  $u_i$, $\hat u_i$ be the $i$th row of $U$, $\hat U$, respectively.
We have, using the fact that
$\left\|(u/\|u\| - v/\|v\|)\right\|\le 2\|u-v\|/\|v\|$ for all
vectors $u,v$ of same dimension, Cauchy-Schwartz, and
Assumption A3,
\begin{align*}
  \|\tilde U-\tilde U^*\|_{2,1}\le & 2\sum_{i=1}^n
  \frac{\|\hat u_i-u_i\|}{\|u_{i}\|}=
  2\sum_{i=1}^n\frac{\|\hat u_{i}-u_{i}\|}{\psi_i'}\\
  \le &2\|\hat U - U\|_F\left(\sum_{i=1}^n (\psi_i')^{-2}\right)^{1/2}
\le 2\psi_0^{-1}n\|\hat U - U\|_F\le C \sqrt{n/\alpha_n}\,.
\end{align*}
Applying \Cref{lem:k-median} to $\tilde U$ and $\tilde U^*$, we
conclude that
     $\hat g$ agrees with $g$ on all but
    $O(\sqrt{n/\alpha_n})$ nodes.

Recall that $\|u_i\|=\psi_i'$ 
for all $i$.
Then Cauchy-Schwartz implies that
\begin{align}\label{eq:psi-ell-1-err}
\|\hat\psi'-\psi'\|_1\le & \sum_{i=1}^n \|\hat u_i - u_i\| \le \sqrt{n}\|\hat U-U\|_F
\le C\alpha_n^{-1/2}\,.
\end{align}
By Assumptions (A2) and (A3) we have $\inf_{i}\psi_i'\ge Cn^{-1/2}$ for some constant $C$.

Let $S_n=\{i:|\hat\psi_i'-\psi_i'|\le n^{-1/2} (n^{1/3}\alpha_n)^{-1/2} \}$. Then
for all $i\in S_n$, we have $\hat\psi_i'=\psi_i'(1+o(1))$ and
\begin{equation}\label{eq:S_n}
|S_n^c|\le \frac{\|\hat\psi' - \psi'\|_1}{n^{-2/3}\alpha_n^{-1/2}}\le C n^{2/3}\,.
\end{equation}

For $1\le k<k'\le K$, consider the oracle estimator
$$
   \hat B_{k,k'}'^*=
    \frac{\sum_{i\in \mathcal N_{1,k}^*, j\in\mathcal N_{1,k'}^*\cup \mathcal N_{2,k'}^*}A_{ij}}
    {\sum_{i\in \mathcal N_{1,k}^*, j\in\mathcal N_{1,k'}^*\cup 
    \mathcal N_{2,k'}^*}\psi'_i\psi'_j}\,.
$$
It is obvious that $\hat P^*_{ij}=\psi_i'\psi_j' \hat B_{g_i 
g_j}'^*=(1+o_P(1))\psi_i\psi_jB_{g_i g_j}=(1+o_P(1))P_{ij}$.
As a result, the claim in part (b) of the theorem follows if we can show that
$\hat B_{k,k'}'=(1+o_P(1))\hat B_{k,k'}'^*$ because for all but a vanishing proportion
of pairs $(i,j)$ we have $(\hat g_i,\hat g_j)=(g_i, g_j)$ and $\hat\psi_i'
\hat\psi_j'=\psi_i'\psi_j'(1+o(1))$ in view of \eqref{eq:S_n}. To this end,
we compare
$$
n^{-1}\hat B_{k,k'}' = 
    \frac{\sum_{i\in \mathcal N_{1,k}, j\in\mathcal N_{1,k'}\cup \mathcal N_{2,k'}^*}A_{ij}}
    {\sum_{i\in \mathcal N_{1,k}, j\in\mathcal N_{1,k'}\cup 
    \mathcal N_{2,k'}}(\sqrt{n}\hat\psi'_i)(\sqrt{n}\hat\psi'_j)}
$$
with
$$
   n^{-1}\hat B_{k,k'}'^*=
    \frac{\sum_{i\in \mathcal N_{1,k}^*, j\in\mathcal N_{1,k'}^*\cup \mathcal N_{2,k'}^*}A_{ij}}
    {\sum_{i\in \mathcal N_{1,k}^*, j\in\mathcal N_{1,k'}^*\cup 
    \mathcal N_{2,k'}^*}(\sqrt{n}\psi'_i)(\sqrt{n}\psi'_j)}\,.
$$
Note that $\sqrt{n}\psi_i'\asymp 1$ for all $i$.  It is easy to check that the numerators
differ by at most $o(n^{5/3})$. For the denominators,
first we compare the denominator of $n^{-1}\hat B_{k,k'}'^*$
with
\begin{equation}\label{eq:denominator-interm}
\sum_{i\in \mathcal N_{1,k}, j\in\mathcal N_{1,k'}\cup 
    \mathcal N_{2,k'}}(\sqrt{n}\psi'_i)(\sqrt{n}\psi'_j)\,.
\end{equation}
It straightforward to check that their ratio tends to 1, because
the index sets of these two summations differ by a vanishing proportion.
Now compare \eqref{eq:denominator-interm} with the denominator of
$n^{-1}\hat B_{k,k'}'$. We have
\begin{align*}
\sum_{i\in \mathcal N_{1,k}, j\in\mathcal N_{1,k'}\cup 
    \mathcal N_{2,k'}}(\sqrt{n}\hat\psi'_i)(\sqrt{n}\hat\psi'_j)
    = &\left(\sum_{i\in\mathcal N_{1,k}} \sqrt{n}\hat\psi_i'\right)
    \left(\sum_{j\in\mathcal N_{1,k'}\cup \mathcal N_{2,k'}} \sqrt{n}\hat\psi_j'\right)\\
    =& (1+o(1))\left(\sum_{i\in\mathcal N_{1,k}} \sqrt{n}\psi_i'\right)
    \left(\sum_{j\in\mathcal N_{1,k'}\cup \mathcal N_{2,k'}} \sqrt{n}\psi_j'\right)\\
    =& (1+o(1))\sum_{i\in \mathcal N_{1,k}, j\in\mathcal N_{1,k'}\cup 
    \mathcal N_{2,k'}}(\sqrt{n}\psi'_i)(\sqrt{n}\psi'_j)\,,
\end{align*}
where the second line follows from the fact
$\sum_{i\in\mathcal N_{j,k}}\hat\psi_i'=(1+o(1))
\sum_{i\in\mathcal N_{j,k}}\psi_i'$ for all $j\in\{1,2\}$, $1\le k\le K$,
which is a consequence of \eqref{eq:psi-ell-1-err}.
Now we conclude that the denominators of $n^{-1}\hat B_{k,k'}'$
and $n^{-1}\hat B_{k,k'}'^*$
 are both of order
at least $n^2$ with ratio tending to one.  
Therefore, the absolute difference between $n^{-1}\hat B_{k,k'}'$
and $n^{-1}\hat B_{k,k'}'^*$ is $o(n^{-1/3})$
which is $o(n^{-1}\hat B_{k,k'}'^*)$.  The same argument can be used for the case $k=k'$.
\end{proof}

%


\bibliographystyle{apa-good}
\bibliography{network}
\end{document}